\newtheorem{prop}{Proposition}
\newtheorem{define}{Definition}
\newenvironment{proof}{{\noindent\it Proof}\quad}{\hfill $\square$\par}
\begin{document}

\title{Compressive Sensing Based Adaptive Active User Detection and Channel Estimation: Massive Access Meets Massive MIMO}

\author{Malong Ke,~\IEEEmembership{Student Member,~IEEE,} Zhen Gao,~\IEEEmembership{Member,~IEEE,} Yongpeng Wu,~\IEEEmembership{Senior Member,~IEEE,}\\
 Xiqi Gao,~\IEEEmembership{Fellow,~IEEE,} and Robert Schober,~\IEEEmembership{Fellow,~IEEE}
\thanks{A part of this paper was presented in the 2018 6th IEEE Global Conference on Signal and Information Processing (GlobalSIP 2018) \cite{Ke_GlobalSIP'18}.}
\thanks{M. Ke and Z. Gao are with School of Information and Electronics, Beijing Institute of Technology, Beijing 100081, China (e-mail: kemalong@bit.edu.cn; gaozhen16@bit.edu.cn).}
\thanks{Y. Wu is with Department of Electronic Engineering, Shanghai Jiao Tong University, Shanghai 200240, China (e-mail: yongpeng.wu@sjtu.edu.cn).}
\thanks{X. Gao is with the National Mobile Communications Research Laboratory, Southeast University, Nanjing 210096, China (e-mail: xqgao@seu.edu.cn).}
\thanks{R. Schober is with the Institute for Digital Communications, Friedrich-Alexander-University Erlangen-Nn$\ddot{\rm{u}}$berg, Erlangen 91054, Germany (e-mail: robert.schober@fau.de).}
}

\markboth{Journal of \LaTeX\ Class Files,~Vol.~14, No.~8, August~2018}
{Shell \MakeLowercase{\textit{et al.}}: Bare Demo of IEEEtran.cls for IEEE Journals}

\maketitle

\begin{abstract}
This paper considers massive access in massive multiple-input multiple-output (MIMO) systems and proposes an adaptive active user detection and channel estimation scheme based on compressive sensing.
By exploiting the sporadic traffic of massive connected user equipments and the virtual angular domain sparsity of massive MIMO channels, the proposed scheme can support massive access with dramatically reduced access latency.
Specifically, we design non-orthogonal pseudo-random pilots for uplink broadband massive access, and formulate the active user detection and channel estimation problems as a generalized multiple measurement vector compressive sensing problem.
Furthermore, by leveraging the structured sparsity of the uplink channel matrix, we propose an efficient generalized multiple measurement vector approximate message passing (GMMV-AMP) algorithm to realize simultaneous active user detection and channel estimation based on a spatial domain \emph {or} an angular domain channel model.
To jointly exploit the channel sparsity presented in both the spatial \emph {and} the angular domains for enhanced performance, a Turbo-GMMV-AMP algorithm is developed for detecting the active users and estimating their channels in an alternating manner.
Finally, an adaptive access scheme is proposed, which adapts the access latency to guarantee reliable massive access for practical systems with unknown channel sparsity level.
Additionally, the state evolution of the proposed GMMV-AMP algorithm is derived to predict its performance.
Simulation results demonstrate the superiority of the proposed active user detection and channel estimation schemes compared to several baseline schemes.
\end{abstract}

\vspace{-0.5mm}
\begin{IEEEkeywords}
Massive access, active user detection, channel estimation, structured sparsity, message passing.
\end{IEEEkeywords}

\vspace{-1.5mm}
\section{Introduction}

\IEEEPARstart{V}{ideo} streaming, social networking, and the emerging Internet-of-Things (IoT) accelerate the development of base stations (BSs) that enable connectivity for billions of user equipments (UEs) with massive data volumes \cite{{Bacground:5G trends 1}}.
However, reliable massive access for massive connectivity is not supported by the current wireless networks \cite{{Bacground:5G trends 3}}.

To guarantee the availability of resources and the quality of service in massive access scenarios, uplink systems have to provide ultra-reliable low-latency detection and channel estimation (CE) for active UEs \cite{{Bacground:5G trends 1}}.
Conventional grant-based random access (RA) protocols require control signaling and the scheduling of uplink access requests for granting of resources \cite{{Grant-based protocol 1},{Grant-based protocol 2},{Grant-based protocol 3},{Grant-based protocol 4}}.
The physical random access channel (PRACH) protocol of Long-Term Evolution is one example of grant-based protocols, and can be classified into two categories: contention-free RA and contention-based RA \cite{{Grant-based protocol 1}}.
For contention-free RA, the BS first allocates UEs the dedicated preambles, which are then transmitted by the active UEs, and finally the BS responds to the requesting UEs without further contention resolution \cite{{Grant-based protocol 2}}.
For contention-based RA, multiple active UEs first transmit preambles selected from a predefined sequence set to access the BS.
Further contention resolution is required if multiple UEs choose the same preamble \cite{{Grant-based protocol 3}}.
Unfortunately, for massive access, collisions are likely to occur as the number of potential UEs can be much larger than the number of available preambles.
The authors in \cite{{Grant-based protocol 4}} proposed a strongest-user collision resolution protocol to resolve collisions in overloaded networks.
However, such grant-based solutions generally suffer from high access latency and require complicated collision resolution schemes in massive access scenarios \cite{{Grant-based protocol 5}}.

As a promising alternative, grant-free RA protocols have recently attracted significant attention, where each active UE directly transmits its pilots and data to the BS without waiting for permission \cite{{Grant-free protocol}, {Liu_SPM'18}}.
Allocating orthogonal channel resources (e.g., using orthogonal pilots as in \cite{{Orthogonal pilot}}) can facilitate the detection of the active UEs at the BS and the estimation of their channels.
However, for massive numbers of potential UEs, this approach fails due to the limited number of available orthogonal channels for a given channel coherence time.
Fortunately, a key characteristic of massive access in future wireless networks is the sporadic traffic of the UEs, i.e., out of the many potential UEs, only a small number are activated and want to access the network in any given time interval \cite{{Grant-based protocol 4}}.

Exploiting this sporadic traffic property, several compressive sensing (CS)-based grant-free RA schemes have been proposed, where the active user detection (AUD) is formulated as a sparse signal recovery problem \cite{{CS-based AUD}, {Senel_Tom'18}}.
In \cite{{CS-based MRA 1},{CS-based MRA 2}}, two advanced CS-based multi-user detection schemes, which leverage the structured sparsity over multiple time slots for accurate support detection, were proposed to jointly detect the active UEs and to decode their data.
To further exploit the a priori information about the transmitted discrete symbols for RA, the authors in \cite{{CS-based MRA 3}} developed a joint approximate message passing (AMP) and expectation maximization (EM) algorithm to improve the detection performance of sparsely active UEs.
Furthermore, the authors in \cite{{CS-based MRA 4}} proposed a threshold aided block sparsity adaptive subspace pursuit algorithm, which enables improved sparse signal recovery.
This work was also extended to cloud radio access networks (C-RANs) for efficient UE activity and data detection \cite{{CS-based MRA 5},{CS-based MRA 6}}.
However, the solutions in \cite{{CS-based MRA 1},{CS-based MRA 2},{CS-based MRA 3},{CS-based MRA 4},{CS-based MRA 5},{CS-based MRA 6}} rely on the availability of perfect channel state information (CSI), which is difficult to obtain, especially for massive wireless-connected UEs.

To jointly perform AUD and CE for single-antenna BSs, blind detection of sparse code multiple access was proposed to support grant-free RA in massive access scenarios in \cite{{CS-based MRA 7}}.
For multi-antenna systems, the authors in \cite{Park_SPAWC'17} proposed a novel joint AUD and CE scheme, where the sparsity of delay-domain channel impulse response (CIR) is leveraged for facilitating CE.
Particularly, in \cite{Park_SPAWC'17}, by iteratively exchanging the active user information and CIR estimates, an identified user cancelation technique is further employed for enhanced performance.
However, the delay-domain CIR sparsity highly depends on the channel environment and may be violated in some practical scenarios.
Considering frequency-domain CE, a modified Bayesian compressive sensing (BCS)-based access scheme for uplink C-RANs was proposed in \cite{{CS-based MRA 8}}, where the structured sparsity over multiple receive antennas are exploited.
To reduce the computational complexity, the authors in \cite{{AMP-based MRA 1}} and \cite{{AMP-based MRA 2}} developed an AMP-based scheme for AUD and CE for massive access in massive multiple-input multiple-output (MIMO) systems.
However, the solutions in \cite{{AMP-based MRA 1}} and \cite{{AMP-based MRA 2}} require the full knowledge of the a priori distribution of the channels and the noise variance, which might not be available in practice.
Besides, the work in \cite{{CS-based MRA 7},{CS-based MRA 8},{Park_SPAWC'17},{AMP-based MRA 1},{AMP-based MRA 2}} considers a narrow-band massive access scenario assuming single-carrier transmission.

Besides the aforementioned work, there are also related solutions that address the massive access problem with sparse user activity from information-theoretical perspectives \cite{{Polyanskiy_ISIT'17}, {Fengler_arXiv'19}}.
Especially in \cite{{Fengler_arXiv'19}}, the authors provided an elegant solution to support massive access with fully non-coherent detection. 

In this paper, we consider massive access for the more challenging enhanced mobile broadband (eMBB) scenario, and investigate AUD and CE for uplink massive MIMO orthogonal frequency division multiplexing (OFDM) systems.
By exploiting the sporadic traffic of the UEs and the virtual angular domain sparsity of massive MIMO channels, we develop a CS-based adaptive AUD and CE scheme.
Specifically, a pilot design based on distributed CS (DCS) theory is proposed for broadband massive access.
Moreover, the AUD and CE problems at the BS are formulated as a generalized multiple measurement vector (GMMV) CS problem \cite{{Frequency common supp 1}}.
By leveraging the structured sparsity of the uplink channel matrix, we propose a GMMV-AMP algorithm for efficient simultaneous AUD and CE based on a spatial domain or an angular domain channel model.
To further improve performance, a Turbo-GMMV-AMP algorithm is proposed for detecting the active UEs and estimating their channels in an alternating manner.
This forms the basis for an adaptive access scheme which allows the adaptation of the required access latency\footnote{For grant-free massive access, the time slots consumed for transmitting the access pilot sequence contribute to the major part of the access latency. In this paper, we focus on the time slot overhead for pilot transmission.} to the sparsity level of the uplink channel matrix, i.e., the maximum number of non-zero entries of its columns.
Additionally, the state evolution (SE) of the proposed GMMV-AMP algorithm is derived to characterize its performance.
Our main contributions can be summarized as follows.

\begin{itemize}
\item{\textbf{DCS theory-based pilot design tailored for multi-carrier systems:} Previous work \cite{{CS-based MRA 1},{CS-based MRA 2},{CS-based MRA 3},{CS-based MRA 4},
      {CS-based MRA 5},{CS-based MRA 6},{CS-based MRA 7},{CS-based MRA 8},{Park_SPAWC'17}, {AMP-based MRA 1},{AMP-based MRA 2}} mainly focuses on frequency-flat narrow-band massive access with single-carrier transmission.
      In contrast, we consider the more challenging massive access problem for eMBB, where OFDM is employed. 
      Based on DCS theory \cite{{Frequency common supp 1}}, we design pseudo-random pilot sequences tailored for multi-carrier systems.
      Thereby, the structured channel sparsity for the different subcarriers is further leveraged to improve AUD and CE performance.}

\item{\textbf{GMMV-AMP algorithm:} Exploiting the structured sparsity of the massive access channel matrix observed at multiple receive antennas and multiple subcarriers, the
      proposed GMMV-AMP algorithm facilitates efficient simultaneous AUD and CE.
      In particular, exploiting the EM algorithm, the GMMV-AMP algorithm can learn the unknown hyper-parameters of the a priori distribution of the channels and the noise variance.}

\item{\textbf{Turbo-GMMV-AMP algorithm:} To jointly leverage the channel sparsity presented in both the spatial and the angular domains, this algorithm performs AUD and
      CE in an alternating manner for further enhanced performance.
      Compared with GMMV-AMP-based simultaneous processing methods and the state-of-the-art solutions, the proposed alternating approach will reap a significant reduction of access latency for massive access.}

\item{\textbf{CS-based adaptive AUD and CE:} Most prior work \cite{{CS-based MRA 1},{CS-based MRA 2},{CS-based MRA 3},{CS-based MRA 4},{CS-based MRA 5},{CS-based MRA 6},
      {CS-based MRA 7},{CS-based MRA 8}, {Park_SPAWC'17}, {AMP-based MRA 1},{AMP-based MRA 2}} leverages the UEs' sporadic traffic only to provide a fixed access latency.
      In contrast, the access latency for the proposed adaptive access scheme can be adapted to the actual sparsity level of the uplink massive access channel matrix for reliable AUD and CE.}
\end{itemize}



\textit{Notations}: Throughout this paper, scalar variables are denoted by normal-face letters, while boldface lower and upper-case letters denote column vectors and matrices, respectively.
$\left[ {\bf{X}}_p \right]_{k,m}$ is the $(k,m)$-th element of matrix ${{\bf{X}}_p}{ \in \mathbb{C}^{K \times M}}$; $\left[ {\bf{X}}_p \right]_{k,:}$  and $\left[ {\bf{X}}_p \right]_{:,m}$ are the $k$-th row vector and the $m$-th column vector of matrix ${{\bf{X}}_p}{ \in \mathbb{C}^{K \times M}}$, respectively.
The transpose, complex conjugate, and conjugate transpose operators are denoted by ${( \cdot )^{\rm{T}}}$,  ${( \cdot )^*}$, and ${( \cdot )^{\rm{H}}}$, respectively.
$\left| {\cal K} \right|_{c}$ is the number of elements in set ${\cal K}$, $\left[ K \right]$ denotes the set $\left\{ {1, \cdots ,K} \right\}$, and $\rm{supp} \left\{  \cdot  \right\}$ is the support set of a vector or a matrix.
${\mathbb{E}}\left[  \cdot  \right]$ denotes the statistical expectation.
$\emptyset$ denotes the empty set and ${\bf{0}}_{K \times M}$ is the ${K \times M}$ zero matrix.
Finally, ${\cal C}{\cal N}\left( {x;\mu,v} \right)$ denotes the complex Gaussian distribution of a random variable $x$ with mean $\mu$ and variance $v$.

\section{System Model}

In this section, we introduce the system model of the uplink massive access in massive MIMO-OFDM systems.
Moreover, the sparsity properties of the massive access channel matrix presented in the spatial and the angular domains are further explained, respectively.

\vspace{-1mm}
\subsection{Uplink Massive Access in Massive MIMO Systems}

\begin{figure}[t]
	\centering
	\includegraphics[width=1\columnwidth,keepaspectratio]
    {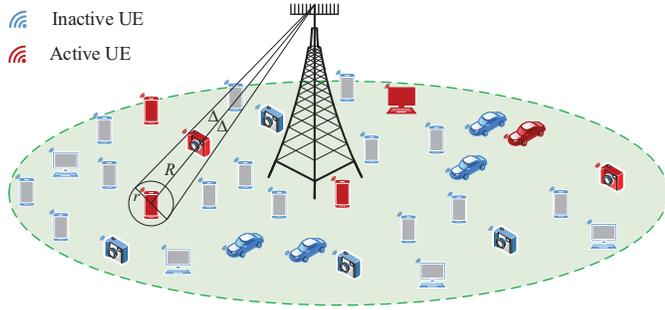}
	\caption{UEs exhibit sporadic traffic in massive access. A one-ring channel model is considered in massive MIMO systems.}
    \label{Fig:System_Model}
    \vspace{-1mm}
\end{figure}

We consider the typical uplink massive access scenario in massive MIMO systems, as illustrated in Fig. \ref{Fig:System_Model}.
There are one BS equipped with an $M$-antenna uniform linear array (ULA) and $K$ potential UEs, where $K$ is usually large (e.g., $K = 10^3$ in \cite{{AMP-based MRA 2}}).
OFDM with $N$ subcarriers is adopted to combat time dispersive channels, and $P$ pilots are uniformly allocated across the $N$ subcarriers.
For the subchannel of the $p$-th pilot subcarrier $\left(1 \le p \le P\right)$, the signal ${\bf y}_{p,k}^t \in \mathbb{C}^{M \times 1}$ received at the BS from the $k$-th UE in the $t$-th time slot (or equivalently the $t$-th OFDM symbol) can be expressed as
\begin{equation}
{\bf y}_{p,k}^t = {{\bf h}_{p,k}}{s_{p,k}^t} + {\bf n}_p^t,
\end{equation}
where ${\bf h}_{p,k} \in \mathbb{C}^{M \times 1}$ is the subchannel associated with the $k$-th UE, $s_{p,k}^t$ is the uplink access pilot of the $k$-th UE, and ${\bf n}_p^t$ denotes the additive white Gaussian noise (AWGN) at the BS for the $p$-th pilot subcarrier and the $t$-th time slot.
Here, without loss of generality, we consider single-antenna UEs.
For a typical massive access scenario, within a given time interval, only a small number of UEs are activated to access the BS.
The UE activity indicator is denoted as $\alpha_k$, and is equal to 1 when the $k$-th UE is active and 0 otherwise.
Meanwhile, we define the set of active UEs as ${\cal A} = \left\{k | {\alpha_k} = 1,\;1 \le k \le K \right\}$, and the number of active UEs is denoted by $K_a = \left|\cal A\right|_c$.
Hence, the signal received at the BS from all active UEs for the $p$-th pilot subcarrier and the $t$-th time slot is given as follows
\begin{equation}
{\bf y}_p^t = \sum_{k = 1}^K {\alpha_k}{{\bf h}_{p,k}}{s_{p,k}^t} + {\bf n}_p^t = {{\bf H}_p}{{\bf s}_p^t} + {\bf n}_p^t,
\end{equation}
where ${\bf H}_p = \left[{\alpha_1}{{\bf h}_{p,1}}, \cdots ,{\alpha_K}{{\bf h}_{p,K}}\right] \in \mathbb{C}^{M \times K}$ and ${\bf s}_p^t = \left[s_{p,1}^t, \cdots, s_{p,K}^t\right]^{\rm T} \in \mathbb{C}^{K \times 1}$.
By considering both the large-scale and the small-scale fading, we can model ${\bf h}_{p,k}$ as ${\bf h}_{p,k} = {\rho_k}{\widetilde {\bf h}_{p,k}}$, where $\rho_k$ is the large-scale fading caused by path loss and shadowing, and $\widetilde {\bf h}_{p,k}$ is the small-scale fading.
For the $p$-th pilot subcarrier, the subchannel of the $k$-th UE is modeled as follows \cite{{Ch_Vir}}
\vspace{-1mm}
\begin{equation}\label{Eq:Channel_Model}
{\widetilde {\bf h}_{p,k}} = \sum_{l = 1}^L {\beta_{k,l}}{{\bf a}_R}\left(\phi_{k,l}\right)e^{-j2\pi{\varpi_{k,l}}\left( - \frac{B_s}{2} + \frac{{B_s}\left(pN/P - 1\right)}{N}\right)},
\end{equation}
where $N/P$ is an integer, $L$ denotes the number of multi-path components (MPCs), $\beta_{k,l}$ and $\varpi_{k,l}$ are the complex path gain and the path delay of the $l$-th MPC, respectively, and $B_s$ is the two-sided bandwidth.
The array response vector ${{\bf a}_R}\left(\phi_{k,l}\right)$ is given by ${{\bf a}_R}\left(\phi_{k,l}\right) = \left[1, e^{-j2\pi{\phi_{k,l}}}, \cdots, e^{-j2\pi\left(M - 1\right){\phi_{k,l}}}\right]^{\rm T}$, where $\phi_{k,l} = \frac{d}{\lambda}{\sin\left(\varphi_{k,l}\right)}$.
Here, $\varphi_{k,l}$ is the angle of arrival (AOA) of the $k$-th UE's $l$-th MPC, $\lambda$ is the wavelength, and $d=\lambda/2$ is the antenna spacing.

\subsection{Space-Frequency Structured Sparsity in Massive Access}
\label{Sub:Spa_Fre}

\begin{figure}[!tp]
    \centering
    \subfigure[]{\includegraphics[width=1.72in]{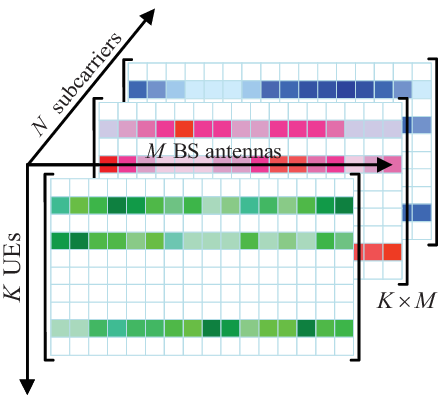}}
    \subfigure[]{\includegraphics[width=1.72in]{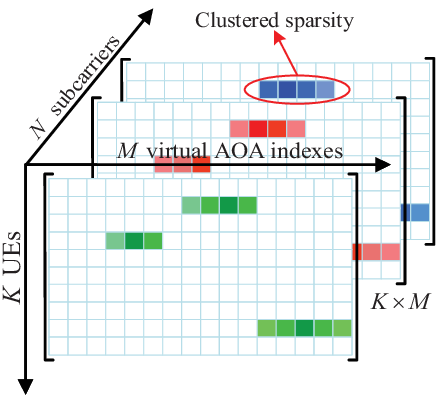}}
    \caption{The uplink massive access channel matrix exhibits two forms of structured sparsity: (a) Space-frequency structured sparsity due to sparse UE activity; (b) Angular-frequency structured sparsity due to the limited angular spread of the MPCs.}
    \label{Fig:Struct_Sparse}
    \vspace{-2mm}
\end{figure}

Due to the sporadic traffic of the UEs, only a small number of UEs are active, i.e., $K_a \ll K$.
Thus, by defining ${\bf X}_p = {\bf H}_p^{\rm T}$, the channel vector $\left[{\bf X}_p\right]_{:,m} \in \mathbb{C}^{K \times 1}$ observed at the $m$-th receive antenna for the $p$-th pilot subcarrier is sparse as
\begin{equation}\label{Eq:Spa_Ch_Sparse}
\left|{\rm supp}\left\{\left[{\bf X}_p\right]_{:,m}\right\}\right|_c = K_a \ll K.
\end{equation}
Moreover, all BS antennas exhibit the same sparsity,
\begin{equation}\label{Eq:Spa_Sparse}
{\rm supp}\left\{\left[{\bf X}_p\right]_{:,1}\right\} = {\rm supp}\left\{\left[{\bf X}_p\right]_{:,2}\right\} =  \cdots = {\rm supp}\left\{\left[{\bf X}_p\right]_{:,M}\right\}.
\end{equation}
We refer to this property as the spatial domain structured sparsity of massive access.
Since the $\alpha_k, \forall k$, are identical for all subchannels, the $\left\{{\bf X}_p\right\}_{p = 1}^P$ also exhibit a common sparsity pattern in the frequency domain as follows
\begin{equation}\label{Eq:Spa_Fre_Sparse}
{\rm supp}\left\{{\bf X}_1\right\} = {\rm supp}\left\{{\bf X}_2\right\} = \cdots = {\rm supp}\left\{{\bf X}_P\right\}.
\end{equation}
The joint structured sparsity in (\ref{Eq:Spa_Sparse}) and (\ref{Eq:Spa_Fre_Sparse}) is referred to as the space-frequency structured sparsity of $\left\{ {{{\mathbf{X}}_p}} \right\}_{p = 1}^P$.
To illustrate this structured sparsity in Fig.\ref{Fig:Struct_Sparse}(a), as an example, we assume that $K_a = 3$ active UEs out of $K = 10$ total UEs access the BS, which is equipped with $M = 16$ antennas.

\subsection{Angular-Frequency Structured Sparsity in Massive MIMO}
\label{Sub:Vir_Fre}

On the other hand, the BS is usually at high elevation with few scatterers around, whereas the UEs are typically located at low elevation in a rich local scattering environment far from the BS \cite{{Frequency common supp 1}}.
This scenario can be modeled using the classical one-ring channel model \cite{{One-ring}}.
For a UE which is located at a distance of $R$ from the BS and surrounded by rich scatterers located within a radius of $r$ around the UE, the angular spread $\Delta  \approx {\rm arctan}\left(r/R\right)$ seen from the BS is expected to be very small as usually $R \gg r$.
This leads to sparsity of massive MIMO channels in the virtual angular domain \cite{{A_R}, {Lin_CL'17}, {Yuan_Tcom'18}}.
Specifically, the virtual angular domain massive MIMO channel associated with the $k$-th UE for the $p$-th pilot subcarrier can be represented as
\begin{equation}\label{Eq:Spa_to_Vir}
{\widetilde {\bf w}_{p,k}} = {{\bf A}_R^{\rm H}}{\widetilde {\bf h}_{p,k}},
\end{equation}
where the transformation matrix ${\bf A}_R \in \mathbb{C}^{M \times M}$ at the BS side is a unitary matrix.
Here, ${\bf A}_R$ depends on the geometry of the adopted array, and becomes the discrete Fourier transform matrix for a ULA with $d = \lambda /2$ \cite{{Ch_Vir}}.
Due to the small $\Delta$ and large $M$, the channel vector $\widetilde {\bf w}_{p,k}$ is sparse, i.e.,
\begin{equation}\label{Eq:Vir_Sparse}
\left|{\rm supp}\left\{\widetilde {\bf w}_{p,k}\right\}\right|_c \ll M,
\end{equation}
and this sparsity is clustered, as illustrated in Fig. \ref{Fig:Struct_Sparse}(b).
Moreover, since the spatial propagation characteristics of all wireless channels within the total bandwidth are similar, all subchannels associated with different subcarriers are affected by the same scatterers \cite{{Frequency common supp 1}}.
Consequently, the $\left\{\widetilde {\bf w}_{p,k}\right\}_{p = 1}^P, \forall k$, have a common sparsity pattern in the frequency domain, i.e.,
\begin{equation}\label{Eq:Vir_Fre_Struct}
{\rm supp}\left\{\widetilde {\bf w}_{1,k}\right\} = {\rm supp}\left\{\widetilde {\bf w}_{2,k}\right\} =  \cdots  = {\rm supp}\left\{\widetilde {\bf w}_{P,k}\right\}.
\end{equation}
We refer to the jointly structured sparsity in (\ref{Eq:Vir_Sparse}) and (\ref{Eq:Vir_Fre_Struct}) as the angular-frequency structured sparsity of massive MIMO channels.
Additionally, we further define the virtual angular domain channel matrix as ${\bf W}_p = {{\bf X}_p}{{\bf A}_R^*} = \left[{\alpha_1}{{\bf w}_{p,1}}, \cdots, {\alpha_K}{{\bf w}_{p,K}}\right]^{\rm T}$, where ${\bf w}_{p,k} = {\rho_k}{\widetilde {\bf w}_{p,k}}$.
Considering the space-frequency structured sparsity described in (\ref{Eq:Spa_Ch_Sparse})-(\ref{Eq:Spa_Fre_Sparse}), we further have $\left|{\rm supp}\left\{\left[{\bf W}_p\right]_{:,m}\right\}\right|_c \ll K_a$, and
\begin{equation}\label{Eq:Vir_Fre_Sparse}
{\rm supp}\left\{{\bf W}_1\right\} = {\rm supp}\left\{{\bf W}_2\right\} = \cdots = {\rm supp}\left\{{\bf W}_P\right\}.
\end{equation}
The illustration in Fig. \ref{Fig:Struct_Sparse}(b) takes both the space-frequency structured sparsity of massive access and the angular-frequency structured sparsity of massive MIMO channels into account.
These sparsity properties will be exploited in the remainder of this paper to achieve low-latency and highly-reliable AUD and CE performance.

\section{CS-Based Active User Detection and Channel Estimation Schemes}
\label{Sec:Proposed Schemes}

In this section, we detail the proposed CS-based AUD and CE schemes for massive access.
First, a DCS-based pilot design is proposed for broadband massive access.
Then, two simultaneous AUD and CE schemes are developed based on a spatial domain and an angular domain channel model, respectively.
On this basis, the alternating AUD and CE schemes are further proposed for enhanced performance.
Finally, the computational complexity of the proposed schemes will be analyzed.

The frame structure of the uplink signals is illustrated in Fig. \ref{Fig:Frame_Structure}.
A frame consists of $T$ time slots, where the first $G$ time slots include both pilots and data, and the remaining $(T - G)$ time slots are reserved for data transmission only.
Here, we assume $T$ is smaller than the channel coherence time, and the activity of the UEs during the $T$ time slots remains unchanged.
At the BS, the received signals in $G$ successive time slots for the $p$-th pilot subcarrier are collected as
\begin{equation}\label{Eq:CS_Model_Spa}
{\bf Y}_p^G = {{\bf S}_p^G}{{\bf X}_p} + {\bf N}_p,\;  \forall p \in \left[P\right],
\end{equation}
where ${\bf Y}_p^G = \left[{\bf y}_p^1, \cdots, {\bf y}_p^G\right]^{\rm T} \in {\mathbb C}^{G \times M}$, ${\bf S}_p^G = \left[{\bf s}_p^1, \cdots, {\bf s}_p^G\right]^{\rm T} \in {\mathbb C}^{G \times K}$, ${\bf X}_p = {\bf H}_p^{\rm T} \in {\mathbb C}^{K \times M}$, and ${\bf N}_p = \left[{\bf n}_p^1, \cdots, {\bf n}_p^G\right]^{\rm T}$.
To avoid complicated scheduling protocols and the associated latencies, in RA, the active UE set (AUS) $\cal A$ and the corresponding channel vectors $\left\{{\bf h}_{p,k}\right\}_{p = 1}^P$, $k \in {\cal A}$, have to be reliably estimated based on the noisy measurements $\left\{{\bf Y}_p^G \right\}_{p = 1}^P$ and the known pilot matrices $\left\{{\bf S}_p^G\right\}_{p = 1}^P$, which is equivalent to estimating $\left\{{\bf X}_p\right\}_{p = 1}^P$ based on (\ref{Eq:CS_Model_Spa}).
Due to the space-frequency structured sparsity of $\left\{{\bf X}_p\right\}_{p = 1}^P$, the AUD and CE based on (\ref{Eq:CS_Model_Spa}) can be formulated as a CS problem with $G \ll K$.
Moreover, considering the virtual angular domain sparsity shown in (\ref{Eq:Vir_Sparse}), we can further transform (\ref{Eq:CS_Model_Spa}) as
\begin{equation}\label{eq:CS Model Vir}
{\bf R}_p^G = {{\bf Y}_p^G}{{\bf A}_R^*} = {{\bf S}_p^G}{{\bf W}_p} + {\widetilde {\bf N}_p},\; \forall p \in \left[P\right],
\end{equation}
where ${\widetilde {\bf N}_p} = {{\bf N}_p}{{\bf A}_R^*}$.
Based on this, we develop two categories of AUD and CE schemes:
\begin{itemize}
\item{\textbf{Simultaneous AUD and CE:}}
The estimate of $\left\{{\bf X}_p\right\}_{p=1}^P$, denoted by  $\{\widehat {\bf X}_p\}_{p=1}^P$, can be directly obtained based on (\ref{Eq:CS_Model_Spa}).
Alternatively, we can first estimate $\left\{{\bf W}_p\right\}_{p=1}^P$ based on (\ref{eq:CS Model Vir}), and then obtain $\{{\widehat{\bf X}_p}\}_{p=1}^P$ according to (\ref{Eq:Spa_to_Vir}).

\item{\textbf{Alternating AUD and CE:}}
Compared with ${\bf X}_p$ in (\ref{Eq:CS_Model_Spa}), the sparser ${\bf W}_p$ in (\ref{eq:CS Model Vir}) will yield a better CE performance, but the common sparsity pattern across multiple columns of ${\bf W}_p$ is destroyed.
Therefore, the proposed alternating scheme leverages (\ref{Eq:CS_Model_Spa}) for AUD and (\ref{eq:CS Model Vir}) for CE, i.e., (\ref{Eq:CS_Model_Spa}) and (\ref{eq:CS Model Vir}) are alternately exploited to reap both the structured sparsity of ${\bf X}_p$ and the enhanced sparsity of ${\bf W}_p$ for further improved performance.
\end{itemize}
In the following, we will first discuss the pilot design and then explain the proposed AUD and CE schemes.

\begin{figure}[t]
	\centering
	\includegraphics[width=1\columnwidth,keepaspectratio]
    {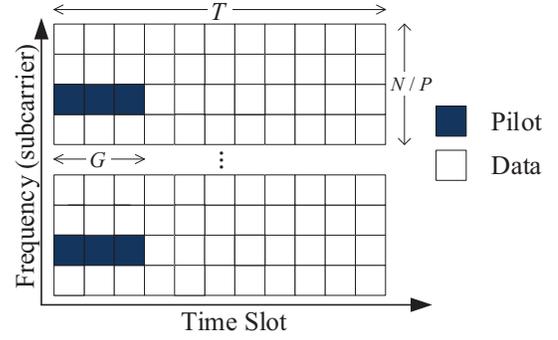}
	\caption{The frame structure of the uplink signals}
    \label{Fig:Frame_Structure}
    \vspace{-1mm}
\end{figure}

\subsection{DCS-Based Pilot Design for Broadband Massive Access}
\label{Sub:Pilot Design}

For the CS problems in (\ref{Eq:CS_Model_Spa}) and (\ref{eq:CS Model Vir}), the RA pilot matrices ${\bf S}_p^G$, $\forall p$, serve as measurement matrix.
The properties of the measurement matrix are crucial for guaranteeing reliable recovery of sparse channel matrices $\left\{{\bf X}_p\right\}_{p=1}^P$.
Hence, the pilot signals should be carefully designed to guarantee reliable AUD and CE.
The sparse signal recovery algorithms proposed in this paper are based on the family of AMP algorithms, which usually require independent and identically distributed (i.i.d.) Gaussian measurement matrices \cite{{AMP}}.
Hence, for the $p$-th pilot subcarrier, the pilot associated with the $k$-th UE in the $t$-th time slot is generated from a standard complex Gaussian distribution, i.e., $s_{p,k}^t \sim {\cal CN}(s_{p,k}^t; 0, 1)$.
Furthermore, the ${\bf S}_p^G$ should be different for different pilot subcarriers to achieve diversity, which means that (\ref{Eq:CS_Model_Spa}) and (\ref{eq:CS Model Vir}) are GMMV-CS models \cite{{Frequency common supp 1}}.
Compared to the conventional multiple measurement vector (MMV) problem \cite{MMV}, where identical pilots would be allocated to all pilot subcarriers, employing different pilot matrices across different pilot subcarriers can improve AUD and CE performance according to DCS theory \cite{{Frequency common supp 1}}.

\vspace{-3mm}
\subsection{Simultaneous AUD and CE Schemes}
\label{Sub:S_AUD_CE}

For massive access, reliable inference of AUS $\cal A$ and the corresponding $\left\{{\bf h}_{p,k}\right\}_{p = 1}^P$, $k \in {\cal A}$, from $\left\{{\bf Y}_p^G \right\}_{p = 1}^P$ is challenging.
In this subsection, we propose two simultaneous AUD and CE schemes, where a GMMV-AMP algorithm is developed to solve the CS problems in (\ref{Eq:CS_Model_Spa}) and (\ref{eq:CS Model Vir}), respectively.
Without loss of generality, we consider (\ref{Eq:CS_Model_Spa}) first and focus on the $p$-th pilot subcarrier.
The obtained results can be easily extended to the model (\ref{eq:CS Model Vir}) and multiple pilot subcarriers.

\emph{1) Spatial Domain Simultaneous AUD and CE (Scheme 1):}
Define $x_{p,k,m} = \left[{\bf X}_p\right]_{k,m}$.
The minimum mean square error (MMSE) estimate of ${\bf X}_p$ is the posterior mean, which can be expressed as
\begin{equation}\label{Eq:Post_Mean_Int}
{\hat x_{k,m}} = \int {x_{k,m}}p\left(x_{k,m}|{\bf Y} \right)d{x_{k,m}},\; \forall k,m.
\end{equation}
In (\ref{Eq:Post_Mean_Int}), the superscript $G$ and index $p$ in $x_{p,k,m}$, ${\bf X}_p$, and ${\bf Y}_p^G$ are dropped to simplify the notation, and the marginal posterior distribution is given by
\begin{equation}\label{Eq:Mar_Post}
p\left(x_{k,m}|{\bf Y}\right) = \int p\left({\bf X}|{\bf Y}\right)d{\bf X}_{\backslash k,m},
\end{equation}
where ${\bf X}_{\backslash k,m}$ denotes the collection of the $\left\{x_{i,j}\right\}_{1 \le j \le M, j \ne m}^{1 \le i \le K, i \ne k}$.
The joint posterior distribution in (\ref{Eq:Mar_Post}) can be computed according to the Bayesian rule as
\begin{equation}\label{Eq:Joint_Post}
\begin{aligned}
p\left({\bf X}|{\bf Y}\right) & = \frac{p\left({\bf Y}|{\bf X}\right)p_0\left({\bf X}\right)}{p\left({\bf Y}\right)} \\
                              & = \frac{1}{\widetilde Z_1}\prod\limits_{m = 1}^M{\left[\prod\limits_{g = 1}^G{p\left(y_{g,m}|{\bf X}\right)}\prod\limits_{k = 1}^K{p_0\left(x_{k,m}\right)}\right]},
\end{aligned}
\end{equation}
where ${\widetilde Z_1} = \iint p\left({\bf Y}|{\bf X}\right)p_0\left({\bf X}\right)d{\bf X}d{\bf Y}$ is a normalization factor and $p_0\left({\bf X}\right)$ is the a priori distribution of ${\bf X}$.
Under the assumption of AWGN, the likelihood function in (\ref{Eq:Joint_Post}) is
\begin{equation}\label{Eq:Likelihood}
p\left(y_{g,m}|{\bf X}\right) = \frac{1}{\pi\sigma}\exp\left( - \frac{1}\sigma\left|y_{g,m} - \sum\nolimits_k {s_{g,k}}{x_{k,m}}\right|^2\right), \!
\end{equation}
where $\sigma$ is the variance of the complex AWGN.
In this paper, to characterize the sparsity of ${\bf X}$, we consider a flexible spike and slab a priori distribution for ${\bf X}$, i.e.,
\setcounter{equation}  {16}
\begin{equation}\label{Eq:Spike_Slab}
\begin{aligned}
\!\!\!p_0\left({\bf X}\right) & = \prod\limits_{m = 1}^M{\prod\limits_{k = 1}^K{p_0\left(x_{k,m}\right)}} \\
                              & = \prod\limits_{m = 1}^M{\prod\limits_{k = 1}^K{\left[\left(1 - \gamma_{k,m}\right)\delta\left(x_{k,m}\right) + {\gamma_{k,m}}f\left(x_{k,m}\right)\right]}},\!\!\!
\end{aligned}
\end{equation}
which can effectively capture the actual prior knowledge of the channel matrix $\bf X$ \cite{{A_R}, {Lin_CL'17}, {Yuan_Tcom'18}}.
In (17), $0 < \gamma_{k,m} <1$ is the sparsity ratio, i.e., the probability of $x_{k,m}$ being non-zero, $\delta\left(\cdot\right)$ is the Dirac delta function, and $f\left(\cdot\right)$ is the distribution of the non-zero entries.
This distribution arises from the literature of AMP algorithm \cite{{Vila_TSP'13}, {Meng_CM'18}}, and has been successfully employed in various AMP-based channel estimation schemes \cite{{A_R}, {Lin_CL'17}, {Yuan_Tcom'18}}.

The factorization in (\ref{Eq:Joint_Post}) can be represented by a bipartite graph, which consists of variable nodes, factor nodes, and the corresponding edges \cite{{Sum-Product}}.
This suggests the use of message passing algorithms \cite{{Sum-Product}} to realize the MMSE estimator.
As the messages for marginal posterior probabilities are difficult to compute for massive access, we resort to the AMP algorithm \cite{{AMP}}, which employs low-complexity heuristics for approximating $p\left(x_{k,m}|{\bf Y}\right)$.


\begin{prop}
In the large system limit, i.e., $K \rightarrow \infty$, while $\gamma = K_a/K$ and $\kappa = G/K$ are fixed, the AMP algorithm decouples the matrix estimation problem based on (\ref{Eq:CS_Model_Spa}) into $KM$ scalar estimation problems.
Considering this, the posterior distributions of $x_{k,m}$, $\forall k,m$, are approximated as
\begin{equation}\label{Eq:Post_Approx1}
\begin{aligned}
p\left(x_{k,m}|{\bf Y}\right) &\approx p\left(x_{k,m}|C_{k,m}^q, D_{k,m}^q\right) \\
                              &\approx \frac{1}{\widetilde Z_2}p_0\left(x_{k,m}\right){\cal CN}\left(x_{k,m}; C_{k,m}^q, D_{k,m}^q\right),
\end{aligned}
\end{equation}
where $q$ denotes the $q$-th iteration, and ${\widetilde Z_2}$ is a normalization factor.
In (\ref{Eq:Post_Approx1}), $D_{k,m}^q$ and $C_{k,m}^q$ are updated at the variable nodes of the bipartite graph as
\begin{align}
D_{k,m}^q &= \left[\sum\nolimits_g \frac{\left|s_{g,k}\right|^2}{\sigma + V_{g,m}^q}\right]^{-1}, \label{Eq:Var_Update1} \\
C_{k,m}^q &= {\hat x_{k,m}^q} + D_{k,m}^q\sum\nolimits_g{\frac{s_{g,k}^*\left(y_{g,m} - Z_{g,m}^q\right)}{\sigma + V_{g,m}^q}}, \label{Eq:Var_Update2}
\end{align}
where $V_{g,m}^q $ and $Z_{g,m}^q$ are updated at the factor nodes of the bipartite graph as
\begin{align}
V_{g,m}^q &= \sum\nolimits_k{\left|s_{g,k}\right|^2{v_{k,m}^q}}, \label{Eq:Fac_Update1}\\
Z_{g,m}^q &= \sum\nolimits_k{s_{g,k}{\hat x_{k,m}^q} - \frac{V_{g,m}^q}{\sigma + V_{g,m}^{q - 1}}\left(y_{g,m} - Z_{g,m}^{q - 1}\right)}. \label{Eq:Fac_Update2}
\end{align}
\end{prop}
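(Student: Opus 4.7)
The plan is to follow the standard AMP derivation via loopy belief propagation (BP) on the bipartite factor graph induced by the factorization in (\ref{Eq:Joint_Post}), specialized to the GMMV structure in which the columns of ${\bf X}$ (indexed by $m$) couple only through the common measurement matrix ${\bf S}$: each column couples to its own measurement column of ${\bf Y}$. Because the factor graph separates over $m$ given ${\bf S}$, the derivation reduces to running AMP in parallel across the $M$ receive antennas, each instance identical in form to the scalar AMP derivation of \cite{{AMP}}; the decoupling into $KM$ scalar problems then follows from the per-column scalar-equivalence principle applied to each antenna.

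For a fixed column $m$, I first write out the sum-product messages $\Delta_{g \to k}^{q}(x_{k,m})$ from factor node $g$ to variable node $k$, and $\Delta_{k \to g}^{q}(x_{k,m})$ from variable node $k$ to factor node $g$, with the likelihood in (\ref{Eq:Likelihood}) providing the factor potentials. Invoking the central limit theorem in the large-system limit $K \to \infty$ with $\kappa = G/K$ fixed, the sum $\sum_{i \ne k} s_{g,i} x_{i,m}$ concentrates around a complex Gaussian with computable mean and variance, so $\Delta_{g \to k}^{q}(x_{k,m})$ becomes approximately complex Gaussian in $x_{k,m}$ parametrized by edge quantities $V_{g \to k,m}^{q}$ and $Z_{g \to k,m}^{q}$ that exclude the contribution of variable node $k$.

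Next, at the variable nodes I multiply these Gaussians together with the spike-and-slab prior (\ref{Eq:Spike_Slab}) and normalize, yielding the approximate posterior in (\ref{Eq:Post_Approx1}) with edge-dependent parameters $C_{k \to g,m}^{q}$ and $D_{k \to g,m}^{q}$. The crucial simplification is to remove the $\to g$ dependence: I expand $Z_{g \to k,m}^{q} = Z_{g,m}^{q} - s_{g,k}\hat{x}_{k \to g,m}^{q}$ and similarly for $V_{g \to k,m}^{q}$, Taylor-expand to first order in $1/\sqrt{K}$, and observe that the $O(1)$ terms collapse into node-only quantities without edge dependence, which directly produces (\ref{Eq:Var_Update1}) and (\ref{Eq:Var_Update2}). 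The retained $O(1/\sqrt{K})$ correction at the factor-node update is what generates the Onsager reaction term $-\frac{V_{g,m}^{q}}{\sigma + V_{g,m}^{q-1}}(y_{g,m} - Z_{g,m}^{q-1})$ in (\ref{Eq:Fac_Update2}); equation (\ref{Eq:Fac_Update1}) follows immediately by combining posterior variances $v_{k,m}^{q}$ with $|s_{g,k}|^{2}$.

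The main obstacle will be the careful accounting of the $O(1/\sqrt{K})$ corrections needed to justify the Onsager term: one must expand $s_{g,k}\hat{x}_{k \to g,m}^{q}$ around $s_{g,k}\hat{x}_{k,m}^{q}$ and track how the derivative of the posterior-mean denoiser with respect to its Gaussian input produces precisely the factor $V_{g,m}^{q}/(\sigma + V_{g,m}^{q-1})$, via a Stein-type identity linking $\partial \hat{x}_{k,m}^{q}/\partial C_{k,m}^{q-1}$ to $v_{k,m}^{q}/D_{k,m}^{q-1}$. I would verify the resulting iteration by checking that at a fixed point $C_{k,m}^{q}$ takes the form of $x_{k,m}$ corrupted by equivalent Gaussian noise of variance $D_{k,m}^{q}$, which is exactly the decoupled scalar channel asserted by the proposition, and I would note that the spike-and-slab posterior in (\ref{Eq:Post_Approx1}) then admits the closed-form denoiser that is exploited in the subsequent subsections.
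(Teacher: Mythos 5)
Your proposal follows essentially the same route as the paper's proof in Appendix A: decomposition of the bipartite graph into $M$ independent per-antenna subgraphs, sum-product message passing with the spike-and-slab prior, Gaussian approximation of the factor-to-variable messages via the central limit theorem in the limit $K \to \infty$, and finally the reduction from edge messages to node quantities (with the Onsager correction) by expanding $Z_{g \to k}^q = Z_g^q - s_{g,k}\hat x_{k \to g}^q$ and $V_{g \to k}^q = V_g^q - |s_{g,k}|^2 v_{k \to g}^q$ and discarding terms that vanish in the large system limit. The only difference is one of detail, not of substance: you spell out the $O(1/\sqrt{K})$ bookkeeping and the Stein-type identity behind the Onsager term, whereas the paper delegates that step to the cited AMP literature.
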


\begin{proof}
Please refer to the Appendix A.
\end{proof}

It is worth noticing that, although the proposed algorithms are developed from the large system limit ($K \rightarrow \infty$), in practice, they perform well even in the medium size problems, such as massive access with hundreds even thousands of UEs, which has been discussed in the literature of AMP algorithms [33, Sec. II], \cite{AMP_SE}.
Moreover, we consider the widely used Gaussian a priori distribution for the channel gains, i.e., $f\left(x_{k,m}\right) = {\cal CN}\left(x_{k,m}; \mu, \tau\right)$ \cite{A_R}.
By exploiting this a priori model in (\ref{Eq:Post_Approx1}), the posterior distribution of $x_{k,m}$ is obtained as
\begin{equation}\label{Eq:Post_Approx2}
\begin{aligned}
\!\!\! p\left(x_{k,m}|C_{k,m}^q, D_{k,m}^q\right) &= \left(1 - \pi_{k,m}^q\right)\delta\left(x_{k,m}\right) \\
                                                  &+ \pi_{k,m}^q{\cal CN}\left(x_{k,m}; A_{k,m}^q, B_{k,m}^q\right), \!\!\!
\end{aligned}
\end{equation}
where
\begin{align}
A_{k,m}^q &= \frac{\tau{C_{k,m}^q} + \mu{D_{k,m}^q}}{D_{k,m}^q + \tau},\; B_{k,m}^q = \frac{\tau{D_{k,m}^q}}{\tau + D_{k,m}^q}, \label{Eq:Var_Mean_Var} \\
\pi_{k,m}^q &= \frac{\gamma_{k,m}}{\gamma_{k,m} + \left(1 - \gamma_{k,m}\right)\exp\left( - {\cal L}\right)}, \label{Eq:Belief_Indicator} \\
{\cal L} &= \frac{1}{2}\ln{\frac{D_{k,m}^q}{D_{k,m}^q + \tau}} + \frac{\left|C_{k,m}^q\right|^2}{2D_{k,m}^q} - \frac{\left|C_{k,m}^q - \mu\right|^2}{2\left(D_{k,m}^q + \tau\right)}, \label{Eq:L}
\end{align}
and $\pi _{k,m}^q$ is referred to as the belief indicator.
The posterior mean (\ref{Eq:Define_Post_Mean}) and variance (\ref{Eq:Define_Post_Var}) can now be explicitly calculated as
\begin{align}
g_a\left(C_{k,m}^q, D_{k,m}^q\right) &= {\pi_{k,m}^q}{A_{k,m}^t}, \label{Eq:Post_Mean} \\
g_c\left(C_{k,m}^q, D_{k,m}^q\right) &= {\pi_{k,m}^q}\left(\left|A_{k,m}^q\right|^2 + B_{k,m}^q\right) - \left|g_a\right|^2, \label{Eq:Post_Var}
\end{align}
respectively.

Therefore, for the $p$-th pilot subcarrier, the MMSE estimate of ${\bf X}_p$ can be acquired by iteratively calculating (\ref{Eq:Var_Update1})-(\ref{Eq:Post_Var}) instead of solving the high-dimensional integrals in (\ref{Eq:Mar_Post}).
The resulting procedure is referred to as the \emph{basic MMV-AMP algorithm}.
However, the basic MMV-AMP algorithm requires full knowledge of the a priori distribution of the channels $\left\{\mu, \tau, \gamma_{k,m}, \forall k,m\right\}$ and the noise variance $\sigma$, which may be difficult to obtain in practice.
Hence, the EM algorithm is exploited to learn the unknown hyper-parameters, i.e., $\bm{\theta} = \left\{\mu, \tau, \sigma, \gamma _{k,m}, \forall k,m\right\}$.
The EM algorithm involves two steps
\begin{align}
Q\left({\bm \theta}, {\bm \theta}^q\right) &= {\mathbb E}\left[\ln{p\left({\bf X}, {\bf Y}\right)}|{\bf Y}; {\bm \theta}^q\right], \label{Eq:EM1}\\
{\bm \theta}^{q + 1} &= \arg\mathop{\max}\limits_{\bm \theta} Q\left({\bm \theta}, {\bm \theta}^q\right), \label{Eq:EM2}
\end{align}
where ${\mathbb E}\left[ \cdot |{\bf Y}; {\bm \theta}^q\right]$ denotes the expectation conditioned on measurements ${\bf Y}$ with parameters ${\bm \theta}^q$, i.e., the expectation is with respect to the posterior distribution $p\left({\bf X}|{\bf Y}; {\bm \theta}^q\right)$.
There are two challenges in harnessing the EM algorithm: (a) the computation of $p\left({\bf X}|{\bf Y}; {\bm \theta}^q\right)$ is of high complexity and (b) the joint optimization of all elements of ${\bm{\theta }}$ is difficult.
Fortunately, in the large system limit with $K \to \infty$, the high complexity of calculating $p\left({\bf X}|{\bf Y}; {\bm \theta}^q\right)$ can be considerably reduced by using the  approximation $p\left({\bf X}|{\bf Y}; {\bm \theta} ^q\right) \approx \prod\limits_m{\prod\limits_k{p\left(x_{k,m}|C_{k,m}^q, D_{k,m}^q\right)}} $ according to (\ref{Eq:Post_Approx1}).
Moreover, the incremental EM algorithm \cite{{Incremental EM}} can be used to simplify the joint optimization of all elements of ${\bm \theta}$, where ${\bm \theta}$ is updated one element at a time and the other parameters are held constant.
By setting the derivative of (\ref{Eq:EM1}) with respect to one element of ${\bm \theta}$ to zero, the update rules of the hyper-parameters are obtained as, $\forall k,m$:
\begin{align}
\mu_{k,m}^{q+1} &= \frac{\sum\nolimits_k{{\pi_{k,m}^q}{A_{k,m}^q}}}{\sum\nolimits_k{\pi_{k,m}^q}}, \label{Eq:Pri_Mean}\\
\gamma_{k,m}^{q+1} &= \pi_{k,m}^{q+1} = \frac{\gamma_{k,m}^q}{\gamma_{k,m}^q + (1-\gamma_{k,m}^q)\exp\left(-{\cal L}\right)}, \label{Eq:Sparse_Ratio}\\
\tau_{k,m}^{q+1} &= \frac{\sum\nolimits_k{{\pi_{k,m}^q}\left[\left|\mu_{k,m}^q - A_{k,m}^q\right|^2 + B_{k,m}^q\right]}}{\sum\nolimits_k{\pi_{k,m}^q}}, \label{Eq:Pri_Var}\\
\sigma_{k,m}^{q+1} &= \frac{1}{GM} \! \sum\nolimits_g \!\! {\left[\frac{\left|y_{g,m} \!-\! Z_{g,m}^q\right|^2}{\left|1 \!+\! V_{g,m}^q/\sigma_{k,m}^q\right|^2} + \frac{{\sigma_{k,m}^q}{V_{g,m}^q}}{\sigma_{k,m}^q \!+\! V_{g,m}^q}\right]}. \label{Eq:Noi_Var}
\end{align}
As the EM algorithm may converge to a local extremum of the likelihood function, the proper initialization of the hyper-parameters is crucial.
Here, we use the following initialization \cite{{EM Initial}}, $\forall k,m$:
\begin{align}
\sigma_{k,m}^1 &= \frac{\sum\nolimits_g\left|y_{g,m}\right|^2}{({\rm SNR}^0 + 1)G}, \label{Eq:Noi_Initial}\\
\mu_{k,m}^1 &= 0,\; \tau_{k,m}^1 = \frac{\sum\nolimits_g\left|y_{g,m}\right|^2 - M{\sigma_{k,m}^1}}{\sum\nolimits_g{\sum\nolimits_k{\left|s_{g,k}\right|^2}}}, \label{Eq:Pri_Mean_Var_Initial} \\
\!\!\! \gamma_{k,m}^1 &= \frac{G}{K} \! \left\{ \! \mathop{\max}\limits_{c>0} \frac{1 \!-\! 2K \! \left[(1 \!+\! c^2)\Phi(-c) \!-\! c\phi(c)\right]/G}{1 \!+\! c^2 \!-\! 2\left[(1 \!+\! c^2)\Phi(-c) \!-\! c\phi(c)\right]} \! \right\}. \!\! \label{Eq:Sparse_Ratio_Initial}
\end{align}
Here, $\Phi(-c)$ and $\phi(c)$ are the cumulative distribution function and the probability density function of the standard normal distribution, respectively, and ${\rm SNR}^0\! =\! 100$ is suggested in \cite{{EM Initial}}.

Equations (\ref{Eq:Var_Update1})-(\ref{Eq:Post_Var}) and (\ref{Eq:Pri_Mean})-(\ref{Eq:Sparse_Ratio_Initial}) are the main EM steps incorporated into the MMV-AMP algorithm to learn the unknown hyper-parameters.
However, the resulting overall algorithm is limited to the AUD and CE of a single pilot subcarrier.
Hence, we extend the MMV-AMP algorithm to the GMMV-AMP algorithm (summarized in Algorithm \ref{Alg:GMMV-AMP}), where the subchannel matrices ${\bf X}_p$, $\forall p$, for all pilot subcarriers are jointly estimated with different measurement matrices ${\bf S}_p^G, \forall p$.
Specifically, in \emph{lines \ref{Step:Fac_Update}-\ref{Step:Var_Update}}, the messages are updated independently for all pilot subcarriers; in \emph{lines \ref{Step:Damp1}} and \emph{\ref{Step:Damp2}}, a damping parameter $\rho = 0.3$ is used to prevent the algorithm from diverging according to \cite{{Damp_Rangan14}}; \emph{line} \ref{Step:Hyper_Update} uses the incremental EM algorithm to learn the unknown hyper-parameters $\bm{\theta}$; \emph{line \ref{Step:Hyper_Refine}} refines the update rule for the sparsity ratio ${\gamma}_{p,k,m}$ to leverage the structured sparsity of the channel matrix for improved CS recovery.
By contrast, the state-of-the-art AMP-based estimators in \cite{{AMP-based MRA 1}} and \cite{{AMP-based MRA 2}} require $\bm{\theta}$ as a priori information.

In Algorithm \ref{Alg:GMMV-AMP}, the sparsity ratio $\gamma_{p,k,m}$ is the probability that the $\left(k,m\right)$-th element of ${\bf X}_p$ is non-zero.
In \emph{line \ref{Step:Hyper_Update}}, $\gamma_{p,k,m}$ is updated independently for all $p$, $k$, and $m$ according to (\ref{Eq:Sparse_Ratio}), which indicates that the common sparsity pattern described in (\ref{Eq:Spa_Sparse}) and (\ref{Eq:Spa_Fre_Sparse}) is not exploited.
To fully exploit the structured sparsity of the channel matrix, as discussed in Section \ref{Sub:Spa_Fre} and illustrated in Fig. \ref{Fig:Struct_Sparse}(a), we assume that the channel elements associated with the same UE have a common sparsity ratio, and further propose to refine $\gamma_{p,k,m}$ as in \emph{line \ref{Step:Hyper_Refine}} of Algorithm \ref{Alg:GMMV-AMP}, where we use
\vspace{-1mm}
\begin{equation} \label{Eq:Refine_1}
\!\!{\cal N}_{p,k,m}\! =\! \left\{\left(o, l, u\right)|o = 1, \cdots, P; \; l = k; \; u = 1, \cdots, M\right\}.\!\!
\vspace{-4mm}
\end{equation}

With the estimate of $\left\{{\bf X}_p\right\}_{p=1}^P$, the AUS and the corresponding channel vectors can be simultaneously acquired.
Specifically, for AUD, we develop two UE activity detectors based on the $\{\widehat {\bf X}_p\}_{p=1}^P$ and the belief indicators $\pi_{p,k,m}$, $\forall p,k,m$, respectively, as follows.
We first define a threshold function $r\left(x; \varepsilon\right)$, where $r\left(x; \varepsilon\right) = 1$ if $ \left|x\right| > {\varepsilon}$, otherwise $r\left(x; \varepsilon\right) = 0$.

\begin{algorithm}[t]
\caption{GMMV-AMP Algorithm}
\label{Alg:GMMV-AMP}
\begin{algorithmic}[1]
\REQUIRE $\forall p$\;: Noisy observations ${\bf Y}_p^G$, pilot matrices ${\bf S}_p^G$; the damping parameter $\rho$, the maximum number of iterations $T_{\rm amp}$, and termination threshold $\eta$.
\ENSURE Estimated channel matrix $\{{\widehat {\bf X}_p}\}_{p=1}^P$ and the related belief indicators $\pi_{p,k,m}$, $\forall p,k,m$.
\STATE $\forall p,k,m,g$: Set iteration index $q$ to 1, initialize the hyper-parameters as in (\ref{Eq:Noi_Initial})-(\ref{Eq:Sparse_Ratio_Initial}), and initialize other parameters as $V_{p,g,m}^0 = 1$, $Z_{p,g,m}^0 = y_{p,g,m}$, $\hat x_{p,k,m}^1 = \mu_{p,k,m}^1$, $v_{p,k,m}^1 = \tau_{p,k,m}^1$.
\label{Step:Initial}
\REPEAT
\STATE $\forall p,g,m$: Update $V_{p,g,m}^q$ and $Z_{p,g,m}^q$ according to (\ref{Eq:Fac_Update1}) and (\ref{Eq:Fac_Update2}) at the factor nodes.
\label{Step:Fac_Update}
\STATE $V_{p,g,m}^q = {\rho}{V_{p,g,m}^{q-1}} + \left(1-\rho\right){V_{p,g,m}^q}$.
\label{Step:Damp1}
\STATE $Z_{p,g,m}^q = {\rho}{Z_{p,g,m}^{q-1}} + \left(1-\rho\right){Z_{p,g,m}^q}$.
\label{Step:Damp2}
\STATE $\forall p,k,m$: Update $D_{p,k,m}^q$ and $C_{p,k,m}^q$ according to (\ref{Eq:Var_Update1}) and (\ref{Eq:Var_Update2}) at the variable nodes, and $\hat x_{p,k,m}^{q+1} = {g_a}\left(C_{p,k,m}^q, D _{p,k,m}^q\right)$, $v_{p,k,m}^{q+1} = {g_c}\left(C_{p,k,m}^q, D_{p,k,m}^q\right)$.
\label{Step:Var_Update}
\STATE $\forall p,k,m$: Update the hyper-parameters $\mu_{p,k,m}^{q+1}$, $\gamma_{p,k,m}^{q+1}$, $\tau_{p,k,m}^{q+1}$, and $\sigma_{p,k,m}^{q+1}$ as in
(\ref{Eq:Pri_Mean})-(\ref{Eq:Noi_Var}).
\label{Step:Hyper_Update}
\STATE $\forall p,k,m$: Refine the update rule for the sparsity ratio, $\gamma_{p,k,m}^{q+1} = \frac{1}{\left|{\cal N}_{p,k,m}\right|_c}\sum_{(o,l,u) \in {\cal N}_{p,k,m}}{\pi _{o,l,u}^{q+1}}$.
\label{Step:Hyper_Refine}
\STATE $q = q + 1$.
\UNTIL $q\! >\!T_{\rm amp}$ or $\sum_p{\left\|{\widehat {\bf X}}_p^q\! -\! {\widehat {\bf X}}_p^{q-1}\right\|_{\rm F}} / \sum_p{\left\|{\widehat {\bf X}}_p^{q-1}\right\|_{\rm F}}\! <\! \eta$.
\RETURN $\{{\widehat {\bf X}_p}\}_{p=1}^P$;\; ${\pi_{p,k,m}} = \gamma_{p,k,m}^q$, $\forall p,k,m$.
\end{algorithmic}
\end{algorithm}

\begin{define}
Based on $\{\widehat {\bf X}_p\}_{p=1}^P$, a channel gain-based activity detector (CG-AD) is proposed for AUD as follows
\begin{equation}\label{Eq:CG-AD}
{\widehat \alpha_k} = \left\{ \begin{array}{*{20}{c}}
1, &\frac{1}{PM}{\sum_p{\sum_m{r\left({\hat x_{p,k,m}}; \varepsilon_{\rm cg}\right) \ge p_{\rm cg}}}},\\
0, &\frac{1}{PM}{\sum_p{\sum_m{r\left({\hat x_{p,k,m}}; \varepsilon_{\rm cg}\right) < p_{\rm cg}}}},
\end{array} \right.
\end{equation}
where $\varepsilon_{\rm cg} = {\widetilde \alpha}\max\left\{ \left|\hat x_{p,k,m}\right|,\; \forall p, k, m\right\}$, $\widetilde \alpha = 0.01$ is the ratio of the minimum and maximum amplitudes of the channel coefficients, see [23, Sec. IV], and $p_{\rm cg} =0.9$\footnote{If more than $90\%$ of the elements of $\{[\widehat {\bf X}_p]_{k,:}\}_{p=1}^P$ are decided to be non-zero, the $k$-th UE is declared active.}.
\end{define}

\begin{prop}
In the large system limit, if a reliable estimate of ${\bf X}_p$ is acquired after the convergence of the GMMV-AMP algorithm,
\begin{equation}
\pi_{k,m} = \left\{ \begin{array}{*{20}{c}} 1,\; x_{k,m} \ne 0, \\ 0,\; x_{k,m} = 0. \end{array}  \right.
\end{equation}
\end{prop}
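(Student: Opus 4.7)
The plan is to exploit the AMP scalar decoupling of Proposition 1 together with the closed form of the belief indicator in (\ref{Eq:Belief_Indicator})--(\ref{Eq:L}), and then let the equivalent scalar noise variance $D_{k,m}^q$ shrink to zero under the ``reliable estimate'' hypothesis. Concretely, I would first recall that after decoupling one may write $C_{k,m}^q = x_{k,m} + w_{k,m}$ with $w_{k,m}\sim\mathcal{CN}(0, D_{k,m}^q)$, so that $(C_{k,m}^q, D_{k,m}^q)$ constitutes a virtual AWGN observation of $x_{k,m}$. Since ``reliable recovery of ${\bf X}_p$'' is tantamount to the MMSE going to zero, the corresponding state-evolution fixed point forces $D_{k,m}^q \to 0$, and consequently $C_{k,m}^q \to x_{k,m}$ almost surely.

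Next, I would substitute these limits into the log-odds quantity $\mathcal{L}$ in (\ref{Eq:L}) and split into two cases. If $x_{k,m}=0$, then $|C_{k,m}^q|^2$ is of order $D_{k,m}^q$ (chi-squared scaled by $D_{k,m}^q$), so $|C_{k,m}^q|^2/(2D_{k,m}^q)$ stays $O(1)$ while $\tfrac{1}{2}\ln\bigl(D_{k,m}^q/(D_{k,m}^q+\tau)\bigr)\to -\infty$ and the last term converges to the finite value $|\mu|^2/(2\tau)$. Hence $\mathcal{L}\to-\infty$, $\exp(-\mathcal{L})\to\infty$, and (\ref{Eq:Belief_Indicator}) gives $\pi_{k,m}^q\to 0$. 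If $x_{k,m}\neq 0$, then the second term behaves like $|x_{k,m}|^2/(2D_{k,m}^q)$, diverging at rate $1/D_{k,m}^q$, which dominates the logarithmic divergence at rate $\ln D_{k,m}^q$ from the first term; the third term remains bounded. Therefore $\mathcal{L}\to+\infty$, yielding $\pi_{k,m}^q\to 1$.

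Finally, I would verify that the EM-learned hyper-parameters $\gamma_{k,m}$, $\mu$, and $\tau$ remain in a regime that does not invalidate the above dominance argument, namely $\gamma_{k,m}$ bounded away from $\{0,1\}$ and $\tau$ bounded away from $0$ and $\infty$ along the trajectory; this is guaranteed by the initialization in (\ref{Eq:Noi_Initial})--(\ref{Eq:Sparse_Ratio_Initial}) and the refinement rule in \emph{line \ref{Step:Hyper_Refine}} of Algorithm \ref{Alg:GMMV-AMP}, which averages $\pi$ over the common-support index set and prevents degeneracy before convergence.

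The main obstacle is formalizing the implication ``reliable estimate $\Rightarrow D_{k,m}^q\to 0$'' rigorously, since in the finite but large-system regime $D_{k,m}^q$ is itself a random quantity tied to the state-evolution fixed point, and one must ensure that the chosen fixed point is the ``good'' one rather than a stationary point where the sparsity ratio collapses. The subsequent comparison of divergence rates ($\ln D$ vs.\ $1/D$) and the case analysis are then routine.
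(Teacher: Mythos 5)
Your proposal is correct and follows essentially the same strategy as the paper's proof: force the effective scalar noise variance $D_{k,m}$ to vanish under the reliability hypothesis, then split on $x_{k,m}$ and compare divergence rates in $\mathcal{L}$ (the $\ln D_{k,m}$ term versus the $\left|x_{k,m}\right|^2/D_{k,m}$ term), concluding $\mathcal{L}\to-\infty$ or $+\infty$ and hence $\pi_{k,m}\to 0$ or $1$ via (\ref{Eq:Belief_Indicator}). The only substantive difference is precisely the step you flagged as the ``main obstacle'': the paper does not invoke state evolution or any fixed-point selection at all. It simply interprets ``reliable estimate'' as the posterior variances vanishing, $v_{k,m}^{\infty}\to 0$, which by the factor-node update (\ref{Eq:Fac_Update1}) forces $V_{g,m}^{\infty}=0$, and then the variable-node update (\ref{Eq:Var_Update1}) gives $D_{k,m}^{\infty}=\sigma/\sum\nolimits_g\left|s_{g,k}\right|^2\approx\sigma/G$ (using $\sum\nolimits_g\left|s_{g,k}\right|^2\approx G$ for i.i.d.\ standard Gaussian pilots), which tends to $0^+$ in the large system limit since $G\to\infty$ with $\kappa=G/K$ fixed. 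So the implication you worried about is closed by a direct computation through the algorithm's own update equations rather than through an SE/MMSE argument, and no discussion of ``good'' versus degenerate fixed points or of hyper-parameter trajectories is needed. A minor discrepancy in the $x_{k,m}=0$ case: the paper writes $C_{k,m}^{\infty}=D_{k,m}^{\infty}r^{\infty}$ for a fixed realization so that the quadratic term $\left|C_{k,m}^{\infty}\right|^2/(2D_{k,m}^{\infty})$ effectively vanishes, whereas you keep it $O(1)$ as a scaled chi-squared variable; either way it is dominated by the $\ln D_{k,m}$ divergence, so the conclusion is unaffected.
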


\begin{proof}
Please refer to the Appendix \ref{Proof:Belief Indicator}.
\end{proof}

\begin{define}
\label{Defn:BI-AD}
Since the belief indicator $\pi_{p,k,m}$ tends to be 1 for ${\hat x}_{p,k,m} \ne 0$ and 0 for ${\hat x}_{p,k,m} = 0$, we further design a belief indicator-based activity detector (BI-AD) as
\begin{equation}\label{Eq:BI-AD}
{\widehat \alpha_k} = \left\{\begin{array}{*{20}{c}}
1, &\frac{1}{PM}{\sum_p{\sum_m{r\left(\pi_{p,k,m}; \varepsilon_{\rm bi}\right) \ge p_{\rm bi}}}},\\
0, &\frac{1}{PM}{\sum_p{\sum_m{r\left(\pi_{p,k,m}; \varepsilon_{\rm bi}\right) < p_{\rm bi}}}}.
\end{array} \right.
\end{equation}
For spatial domain simultaneous AUD and CE, we set $\varepsilon_{\rm bi} = \varepsilon_{\rm bi}^{\rm spa} = 0.5$ to make the missed detection and false alarm probabilities identical, and the same as $p_{\rm cg}$, we set $p_{\rm bi} = p_{\rm bi}^{\rm spa} = 0.9$ based on empirical experience.
Nevertheless, we note that the decisions of the CG-AD and BI-AD are not sensitive to the values of $p_{\rm cg}$ and $p_{\rm bi}$\footnote{We found empirically via simulations for a wide range of system parameters that $p_{\rm cg} = p_{\rm bi}^{\rm spa} = 0.9$ yields a high AUD performance.}.
Finally, if the $k$-th UE is declared active, its channel is estimated as ${\widehat {\bf h}_{p,k}} = \left[ {{{\widehat {\mathbf{X}}}_p}} \right]_{k,:}^{\text{T}}$.
\end{define}

\emph{2) Angular Domain Simultaneous AUD and CE (Scheme 2):}
The GMMV-AMP algorithm designed for CS model (\ref{Eq:CS_Model_Spa}) can be directly applied to model (\ref{eq:CS Model Vir}) for angular domain simultaneous AUD and CE by replacing ${\bf Y}_p^G$ and $\widehat {\bf X}_p^G$ with ${\bf R}_p^G$ and $\widehat {\bf W}_p^G$, respectively.
Actually, (\ref{Eq:CS_Model_Spa}) and (\ref{eq:CS Model Vir}) are equivalent signal models.
Meanwhile, it is worth noticing that the channel matrix ${\bf X}_p$ and ${\bf W}_p$ exhibit two different forms of structured sparsity.
Hence, the difference between the spatial domain and the angular domain schemes mainly lies in the different update rules (i.e., (\ref{Eq:Refine_1}) and (\ref{Eq:Neighbor Set})) for refining the sparsity ratio $\gamma_{p,k,m}$ in \emph{line \ref{Step:Hyper_Refine}} of Algorithm \ref{Alg:GMMV-AMP}.
Different from (\ref{Eq:CS_Model_Spa}), both the space-frequency and the angular-frequency structured sparsity of the channel matrix are considered in (\ref{eq:CS Model Vir}), as has been discussed in Section \ref{Sub:Vir_Fre} and illustrated in Fig. \ref{Fig:Struct_Sparse}(b).
Define the neighbors of $w_{p,k,m}$ as
\begin{equation}\label{Eq:Neighbor Set}
\begin{aligned}
{\widetilde {\cal N}}_{p,k,m} = &\left\{\left(p - 1, k, m\right), \left(p + 1, k, m\right), \right.\\
                     &\left. \left(p, k, m - 1\right), \left(p, k, m + 1\right) \right\}.
\end{aligned}
\end{equation}
Due to the clustered sparsity and the structured sparsity described in (\ref{Eq:Vir_Sparse})-(\ref{Eq:Vir_Fre_Sparse}), $w_{p,k,m}$ and the elements of ${\widetilde {\cal N}}_{p,k,m}$ tend to be simultaneously either zero or non-zero\footnote{Here, we assume that no power leakage caused by the discrete Fourier transformation in (\ref{Eq:Spa_to_Vir}) \cite{A_R}. Otherwise, the channel would be approximately sparse.}.
Hence, when the GMMV-AMP algorithm is applied to (\ref{eq:CS Model Vir}), ${\cal N}_{p,k,m}$ in \emph{line \ref{Step:Hyper_Refine}} is replaced by ${\widetilde{\cal N}}_{p,k,m}$.
Based on the estimate of $\left\{{\bf W}_p\right\}_{p=1}^P$, the $\{\widehat {\bf X}_p\}_{p=1}^P$ can be acquired  according to (\ref{Eq:Spa_to_Vir}).
Moveover, the active UEs can be detected via CG-AD in (\ref{Eq:CG-AD}) based on $\{\widehat {\bf X}_p\}_{p=1}^P$ or via BI-AD in (\ref{Eq:BI-AD}) based on the belief indicators of $w_{p,k,m}$, $\forall p,k,m$.
However, as the common sparsity across multiple BS antennas is destroyed, it is challenging to find a suitable $p_{\rm bi}$ for BI-AD.
Hence, we set $p_{\rm bi}$ to $p_{\rm bi}^{\rm ang} = p_{\rm bi}^{\rm spa}{S_{\rm min}^a}/M$, where $S_{\rm min}^a$ is the minimum number of non-zero elements of $\left[{\bf W}_p\right]_{k,:}$, $\forall p, k$.
For AUD, our simulations in Section \ref{Sec:Sim Result} reveal that, for \emph{Scheme 1} based on model (\ref{Eq:CS_Model_Spa}), BI-AD is more reliable than CG-AD, but for \emph{Scheme 2} based on model (\ref{eq:CS Model Vir}), CG-AD is more reliable than BI-AD .

\vspace{-3mm}
\subsection{Alternating AUD and CE Schemes}
\label{Sub:A_AUD_CE}

The simultaneous AUD and CE based on (\ref{Eq:CS_Model_Spa}) or (\ref{eq:CS Model Vir}) can not fully exploit the enhanced sparsity of ${\bf W}_p$ and the common sparsity pattern across the multiple columns of ${\bf X}_p$.
Hence, we propose a Turbo-GMMV-AMP algorithm that performs AUD and CE in an alternating manner, where model (\ref{Eq:CS_Model_Spa}) and model (\ref{eq:CS Model Vir}) are alternately exploited for enhanced performance.
This facilitates the development of a CS-based adaptive AUD and CE scheme for practical massive access scenarios with unknown channel sparsity level.

\begin{figure}[t]
	\centering
	\includegraphics[width=1\columnwidth,keepaspectratio]
    {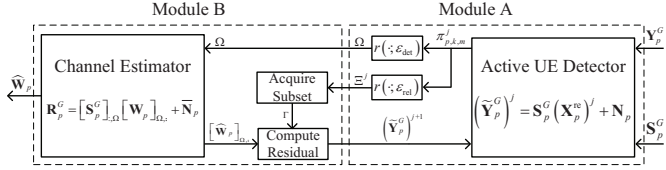}
	\caption{Block diagram of the proposed Turbo-GMMV-AMP algorithm.}
    \label{Fig:Turbo_Iteration}
    \vspace{-5mm}
\end{figure}

\emph{1) Turbo-GMMV-AMP Algorithm (Scheme 3):}
This algorithm is summarized in Algorithm \ref{Alg:Turbo-GMMV-AMP}, and consists of module A and module B, as illustrated in Fig. \ref{Fig:Turbo_Iteration}.
In the first turbo iteration ($j = 1$), module A determines a rough estimate of the AUS, where the GMMV-AMP algorithm is applied based on model (\ref{Eq:CS_Model_Spa}) to obtain the belief indicators $\pi_{p,k,m}^j$, $\forall p,k,m$.
Even for small $G$ with poor estimation quality of ${\bf X}_p$, i.e., $v_{k,m}^{\infty}$ is smaller than a specific value rather than $v_{k,m}^{\infty}\! \to\! 0$, the $0 \le \pi_{p,k,m}^j \le 1$ tend to be 0 for $x_{p,k,m} = 0$ but 1 for $x_{p,k,m} \ne 0$ after convergence of the GMMV-AMP algorithm, the related proof is similar to Appendix \ref{Proof:Belief Indicator}.
Hence, we use the BI-AD rather than CG-AD to acquire two AUS estimates with different reliability: a rough AUS $\Omega$ with a lower threshold ${\varepsilon_{\rm{det}}}$ and a reliable AUS $\Xi^j$ with a higher threshold ${\varepsilon_{\rm{rel}}}$, as shown in \emph{lines \ref{Step:Det_Start}-\ref{Step:Det_End}} in Algorithm \ref{Alg:Turbo-GMMV-AMP}, so that $\Xi^j \subseteq \Omega$.
Here, we set $\varepsilon_{\rm det}$ to 0.4 to obtain a low missed detection probability, and $\varepsilon_{\rm rel}$ to 0.9 to avoid false alarm and to achieve reliable detection of the active UEs, i.e., the UEs in $\Xi^j$ are active with high probability.
These two AUSs, $\Omega$ and $\Xi^j$, are passed on to module B.

In module B, with the rough AUS estimate $\Omega$, the angular domain channel vectors of the UEs in $\Omega$, i.e., $\left[{\bf W}_p\right]_{\Omega,:}$, are estimated based on the model in (\ref{eq:CS Model Vir}) as
\begin{equation}\label{Eq:CS_Model_Vir_Act}
{\bf R}_p^G = \left[{\bf S}_p^G\right]_{:,\Omega}\left[{\bf W}_p\right]_{\Omega,:} + {\overline {\bf N} _p},\; \forall p \in \left[P\right],
\end{equation}
where $\left[{\bf S}_p^G\right]_{:,\Omega} \in \mathbb{C}^{G \times \left|\Omega\right|_c}$ and $\left[{\bf W}_p\right]_{\Omega,:} \in \mathbb{C}^{\left|\Omega\right|_c \times M}$ are sub-matrices of ${\bf S}_p^G$ and ${\bf W}_p$, respectively, ${\overline {\bf N}_p} = \left[{\bf S}_p^G\right]_{:,{{\cal K} - \Omega}}\left[{\bf W}_p\right]_{{{\cal K} - \Omega},:} + {\widetilde {\bf N}_p}$, ${\cal K}$ is the set of all potential UEs, and ${\cal K} - \Omega$ denotes the difference set of sets $\cal K$ and $\Omega$.
To reduce the power of ${\overline {\bf N}_p}$, ${\cal A} \subseteq \Omega$ is desirable, i.e., a low missed detection probability.
The dimension of the uplink channel matrix for CE is reduced by considering only the UEs in $\Omega$.
Furthermore, the low-dimensional channel matrix $\left[{\bf W}_p\right]_{\Omega,:}$ is still sparse due to the angular-frequency structured sparsity of massive MIMO channels.
Hence, we can estimate $\left[{\bf W}_p\right]_{\Omega,:}, \forall p$, by applying the GMMV-AMP algorithm to (\ref{Eq:CS_Model_Vir_Act}), see \emph{line \ref{Step:CE}} of Algorithm \ref{Alg:Turbo-GMMV-AMP}.
Moreover, the signals received from the UEs in $\Gamma$, a subset of $\Xi^j$, are removed to enhance the sparsity of the uplink massive access channel matrix for AUD.
The residual received signals $\left(\widetilde {\bf Y}_p^G\right)^j$ are computed in \emph{lines \ref{Step:Re_Set}} and \emph{\ref{Step:Compute_Re}}, and are passed on to module A.

In the following turbo iterations ($j > 1$), the AUD problem in module A is to recover $\left({\bf X}_p^{\rm re}\right)^j$  based on the following model
\begin{equation}\label{Eq:CS_Model_Spa_Re}
\left(\widetilde {\bf Y}_p^G\right)^j = {{\bf S}_p^G}\left({\bf X}_p^{\rm re}\right)^j + {\bf N}_p,  \forall p \in \left[P\right],
\end{equation}
where $(\widetilde {\bf Y}_p^G)^j$ contains the residual received signals in the $j$-th turbo iteration, $\left({\bf X}_p^{\rm re}\right)^j = {\bf X}_p - {\widetilde {\bf X}_p^j}$, and ${\widetilde {\bf X}_p^j} \in \mathbb{C}^{K \times M}$ is defined as, $[{\widetilde {\bf X}_p}^j]_{\Gamma,:} = [\widehat {\bf X}_p^{j-1}]_{\Gamma,:}$, while $[{\widetilde {\bf X}_p^j}]_{{\cal K}-\Gamma,:} = {\bf 0}_{\left|{\cal K}-\Gamma\right|_c \times M}$.
To prevent the GMMV-AMP algorithm from diverging, we only remove the signals received from a part of the UEs in $\Xi^j$, i.e., $\lambda_{\rm aus} < 1$ (e.g. $\lambda_{\rm aus} = 0.8$).
Modules A and B will be executed iteratively.
Since the $\left({\bf X}_p^{\rm re}\right)^j$ become sparser and the channels of the UEs in $\Omega$ are iteratively re-estimated as the turbo iterations proceed, the $\Omega$ and the corresponding channels are constantly refined.
Therefore, compared to the simultaneous processing approaches in Section \ref{Sub:S_AUD_CE}, the proposed alternating approach facilitates more reliable AUD and CE with significantly smaller $G$, which means a dramatic reduction of access latency.

\begin{algorithm}[t]
\caption{Turbo-GMMV-AMP Algorithm}
\label{Alg:Turbo-GMMV-AMP}
\begin{algorithmic}[1]
\REQUIRE $\forall p$\;: Noisy observations ${\bf Y}_p^G$, pilot  matrices ${\bf S}_p^G$; the maximum number of turbo iterations $T_{\rm tur}$.
\ENSURE The estimated AUS $\widehat {\cal A}$ and the corresponding channel vectors $\{\widehat {\bf h}_{p,k}\}_{p=1}^P, \forall k \in {\widehat {\cal A}}$.
\STATE Initialization: $j\! =\! 1$, $\Xi^0\! =\! \emptyset$; $\left(\widetilde {\bf Y}_p^G\right)^1\! =\! {\bf Y}_p^G$.
\REPEAT
\STATE $k = 0$, $\Omega = \Gamma = \emptyset$.
\label{Step:Iter_Start}
\STATE // Module A: Active UE detector
\STATE $\forall p,k,m$\;: Acquire the belief indicators $\pi_{p,k,m}^j$ by applying the GMMV-AMP algorithm to model (\ref{Eq:CS_Model_Spa_Re}).\!\!\!\!\!\!
\label{Step:Module_A}
\FOR {$k \le K$}
\label{Step:Det_Start}
\IF {$\frac{1}{PM}\sum_p{\sum_m{r\left(\pi_{p,k,m}^j; \varepsilon_{\rm det}\right) \ge p_{\rm bi}}}$}
\STATE $\Omega  = \Omega \cup \Xi^{j-1} \cup \left\{k\right\}$.\\
\ENDIF
\IF {$\frac{1}{PM}\sum_p{\sum_m{r\left(\pi_{p,k,m}^j; \varepsilon_{\rm rel}\right) \ge p_{\rm bi}}}$}
\STATE $\Xi^j  = \Xi^{j-1}  \cup \left\{k\right\}$.\\
\ENDIF
\ENDFOR
\label{Step:Det_End}
\STATE // Module B: Virtual angular domain channel estimator
\vspace{-2mm}
\STATE $\forall p$\;: ${\bf R}_p^G = {{\bf Y}_p^G}{{\bf A}_R^*}$, ${\widehat {\bf W}_p}^j = {\bf 0}_{K \times M}$. \% ${\widehat {\bf W}_p}^j$ is the estimated channel matrix in the virtual angular domain.
\vspace{-2mm}
\STATE $\forall p$\;: Acquire the channel vectors $\left[{\bf W}_p^j\right]_{k,:}, \forall k\! \in\! \Omega$ by applying the GMMV-AMP algorithm to model (\ref{Eq:CS_Model_Vir_Act}).\!\!\!\!\!\!
\label{Step:CE}
\STATE Acquire set $\Gamma$, $\Gamma \subseteq \Xi^j$, and $\left|\Gamma\right|_c / \left|\Xi^j\right|_c = \lambda_{\rm aus}$. \% The elements in $\Gamma$ are randomly selected from $\Xi^j$.
\label{Step:Re_Set}
\STATE ${\widehat {\bf X}_p^j} = {\widehat {\bf W}_p^j}{{\bf A}_R^{\rm T}}$;\; $\left(\widetilde {\bf Y}_p^G\right)^{j+1} = {{\bf Y}_p^G - \left[{\bf S}_p^G\right]_{:,\Gamma}}\left[ \widehat {\bf X}_p^j\right]_{\Gamma,:}$.
\label{Step:Compute_Re}
\STATE $j = j + 1$.
\label{Step:Iter_End}
\UNTIL $j\! > \!T_{\rm tur}$.
\STATE ${\widehat {\cal A}} = \Omega$; ${\widehat {\bf h}_{p,k}} = \left(\left[\widehat {\bf W}_p^{j-1}\right]_{k,:}{{\bf A}_R^{\rm T}}\right)^{\rm T}$, $\forall p,k$.
\RETURN ${\widehat {\cal A}}$;\; $\{\widehat {\bf h}_{p,k}\}_{p=1}^P, \forall k \in {\widehat {\cal A}}$.
\end{algorithmic}
\end{algorithm}

\begin{algorithm}[t]
\caption{CS-Based Adaptive AUD and CE Scheme}
\label{Alg:Proposed_AUD_CE}
\begin{algorithmic}[1]
\STATE Determine the initial time slot overhead $G_0$ and set iteration index $i$ to 0.
\REPEAT
\STATE $\forall p$\;: Collect ${\bf Y}_p^{G_i}$ and ${\bf S}_p^{G_i}$ for given $G_i$. \% $G_i$ is the required $G$ for massive access in the $i$-th iteration.
\label{Step:Collect_Signals}
\STATE $\forall p$\;: Acquire the AUS estimate $\widehat {\cal A}$ and the corresponding CSI estimates $\widehat {\bf h}_{p,k}$, $k \in {\widehat {\cal A}}$ by leveraging Algorithm \ref{Alg:Turbo-GMMV-AMP}.
\STATE $\forall p$\;: Obtain $\widehat {\bf X}_p$ according to $\widehat {\cal A}$ and $\widehat {\bf h}_{p,k}$, $k \in \widehat {\cal A}$.
\label{Step:AUD_CE}
\STATE $\forall p$\;:\! \begin{small}${\bf Y}_p^{G_{i+1}}\!\! =\!\! \left[({\bf Y}_p^{G_i})^{\rm T}, {\bf y}_p^{G_{i+1}}\right]^{\rm T}$;\! ${\bf S}_p^{G_{i+1}}\!\! =\!\! \left[({\bf S}_p^{G_i})^{\rm T}, {\bf s}_p^{G_{i+1}}\right]^{\rm T}$.\end{small}\!\!
\STATE $G_{i+1} = G_i + 1$; $i = i + 1$.
\UNTIL $\sum_p{\left\|{\bf Y}_p^{G_{i-1}} - {\bf S}_p^{G_{i-1}}{\widehat {\bf X}_p}\right\|_{\rm F}^2} / \left(PG_{i-1}\right) < \epsilon$.
\label{Step:Evaluate}
\end{algorithmic}
\end{algorithm}

\emph{2) CS-Based Adaptive AUD and CE (Scheme 4):}
For practical systems, the UE activity and the channel environment are time-varying.
As a result, the sparsity level of the uplink massive access channel matrix may change over time.
If the channel matrix is relatively sparse, a small time slot overhead $G$ is sufficient to acquire accurate AUS and CSI estimates, while if the channel matrix is relatively dense, a large $G$ is required to guarantee reliable sparse signal recovery.
This motivates us to propose a CS-based adaptive AUD and CE scheme, as shown in Algorithm \ref{Alg:Proposed_AUD_CE}, to adaptively adjust $G$ to facilitate low-latency and high-reliability AUD and CE.
Algorithm \ref{Alg:Proposed_AUD_CE} can be summarized as follows.
\begin{itemize}
\item{\textbf{Step 1:} In each time slot, all active UEs transmit non-orthogonal RA pilots to the BS.
The pilots are pre-designed based on DCS theory and known to the system.}

\item{\textbf{Step 2:} The BS collects the received signals over multiple successive time slots, and the Turbo-GMMV-AMP algorithm is utilized to alternately acquire the AUS and CSI estimates, see \emph{lines \ref{Step:Collect_Signals}}-\emph{\ref{Step:AUD_CE}} of Algorithm \ref{Alg:Proposed_AUD_CE}.
Besides, the estimation reliability is evaluated based on a pre-specified criterion, see \emph{line \ref{Step:Evaluate}}} of Algorithm \ref{Alg:Proposed_AUD_CE}.

\item{\textbf{Step 3:} If the criterion is met, the outcome of the evaluation is informed to all UEs, and the active UEs stop transmitting pilots and start to transmit their data to the BS without scheduling permission.
Otherwise, \emph{Step 1} is repeated until the received signals collected at the BS are sufficient to meet the evaluation criterion.}
\end{itemize}
Here, based on CS theory, $G_0 \ge {\mathbb E}[|{\rm supp}\{\left[{\bf W}_p\right]_{:,m}\}|_c]$ is desirable, and $\epsilon = 0.8$ is suggested in [27, Sec. V].

\subsection{Computational Complexity Analysis}
\label{Sub:Complexity_Analysis}

For the practical implementation of the algorithms, the related computational complexity determines the hardware cost and the power consumption for processing.
Hence, the complexity analysis for the proposed algorithms is an important issue, especially for massive connectivity with large scale systems.

Table I compares the complexity of the proposed GMMV-AMP algorithm and Turbo-GMMV-AMP algorithm, as well as the conventional greedy CS recovery algorithms, i.e., generalized subspace pursuit (GSP) \cite{GSP}, simultaneous orthogonal matching pursuit (SOMP) \cite{SOMP}, and distributed sparsity adaptive matching pursuit DSAMP \cite{{Frequency common supp 1}, {Gao_CM'18}}, in terms of the number of the required complex multiplications in each iteration for AUD and CE.
Obviously the matrix inversion implemented in three greedy CS recovery algorithms for least square operation contributes to most of the computational complexity.
Hence, three greedy methods have the same order of computational complexity, i.e., in order of cubic magnitude of the number of active UEs $K_a$.
By contrast, the complexity of the proposed GMMV-AMP and Turbo-GMMV-AMP algorithms increases linearly with $K, G, M$, and $P$.
Hence, for massive access scenarios with large $K_a$, the proposed algorithms can be more computationally efficient.

\begin{table}[t]
\caption{Computational Complexity for AUD and CE}
\label{Tab:Scheme_Compare}
\vspace{-0mm}
\begin{threeparttable}
\begin{center}
\scalebox{0.74}{
\begin{tabular}{cc}
\hline
\hline
Algorithm& Number of complex multiplications in each iteration\\
\hline
GSP& $2(G+1)KMP + GM^2P + 2(M+1)GPK_a^2 + 2PK_a^3$\\
SOMP& $(2G+1)KMP + GM^2P + (M+1)GPo^2 + Po^3$\\
DSAMP& $(2G+3)KMP + MP + 2(M+1)GPu^2 + 2Pu^3$\\
GMMV-AMP& $4GKMP + 3GKP + 16GMP + 20KMP$\\
Turbo-GMMV-AMP& $2T_{\rm amp}(4GKMP + 3GKP + 16GMP + 20KMP)$\\
\hline
\hline
\end{tabular}}
\end{center}
\begin{tablenotes}
\footnotesize
\item Note: $o$ is the iteration index \cite{SOMP}, and $u$ denotes the stage index \cite{{Frequency common supp 1}}.
\end{tablenotes}
\end{threeparttable}
\vspace{-4.5mm}
\end{table}

\section{State Evolution}

SE is a framework for analyzing the performance of AMP algorithms in the large system limit where $K \rightarrow \infty$ \cite{{AMP_SE}}.
In this section, we harness SE to characterize the mean square error (MSE) performance of the proposed GMMV-AMP algorithm.
The MSE of the estimation and the variance of the estimated channels are defined as
\begin{align}
e^q &= \frac{1}{KMP}\sum\nolimits_p{\sum\nolimits_k{\sum\nolimits_m{ \left|{\hat x}_{p,k,m}^q - x_{p,k,m}\right|^2 }}}, \label{Eq:SE_MSE}\\
\vartheta^q &= \frac{1}{KMP}\sum\nolimits_p{\sum\nolimits_k{\sum\nolimits_m{ v_{p,k,m}^q }}}, \label{Eq:SE_Var}
\end{align}
respectively.
Based on the derivations in Appendix \ref{Proof:Decouple Post}, the GMMV-AMP algorithm can be explained intuitively.
For the $p$-th pilot subcarrier, the proposed GMMV-AMP algorithm decouples the matrix estimation problem in (\ref{Eq:CS_Model_Spa}) into $KM$ independent scalar estimation problems, as
\begin{equation}\label{Eq:AMP_Dcouple}
{\bf Y} = {\bf SX} + {\bf N} \to C_{k,m}^q = x_{k,m} + {\tilde n_{k,m}^q},\; \forall k,m,
\end{equation}
where index $p$ and superscript $G$ are omitted for notational simplicity, $C_{k,m}^q \sim {\cal CN}\left(C_{k,m}^q; x_{k,m}, D_{k,m}^q\right)$ is the equivalent measurement of $x_{k,m}$ in the $q$-th iteration, and ${\tilde n_{k,m}^q} \sim {\cal CN}\left({\tilde n_{k,m}^q}; 0, D_{k,m}^q\right)$ denotes the effective noise.

\begin{prop}
Define a scalar random variable ${X_0}\! \sim\! p_0\left(X\right)$.
Then, we have $C^q = x_0 + {\tilde n^q}$, ${\tilde n^q} \sim {\cal CN}\left({\tilde n^q}; 0, D^q\right)$, and the posterior distribution of $x_0$ can be expressed as
\begin{equation}
p\left(x_0|C^q,D^q\right) = \frac{1}{\widetilde Z_3}p_0\left(x_0\right){\cal CN}\left(x_0;C^q,D^q\right),
\end{equation}
where
\begin{equation}\label{Eq:SE_C_D}
C^q = x_0 + \sqrt{\frac{\sigma + Ke^q}{G}}z,\; D^q = \frac{\sigma + K\vartheta^q}{G},
\end{equation}
and $z\! \sim\! {\cal CN}\left(z;0,1\right)$.
Hence, $e^{q+1}$ and $\vartheta^{q+1}$ are updated as
\begin{align}
e^{q+1} &= \int{\int{ \left|g_a\left(C^q,D^q\right) - x_0\right|^2}}{\cal D}x_0{\cal D}z, \label{Eq:SE_MSE_Update} \\
\vartheta^{q+1} &= \int{\int{ g_c\left(C^q,D^q\right)}}{\cal D}x_0{\cal D}z, \label{Eq:SE_Var_Update}
\end{align}
where ${\cal D}x_0 = p_0\left(x_0\right)d{x_0}$ and ${\cal D}z = e^{-\left|z\right|^2}/\pi dz$.
Therefore, defining a scalar random variable following the same distribution as the channel coefficient, i.e., $X_0 \sim p\left(X\right)$, $C_{k,m}^q$ and $D_{k,m}^q$ in the GMMV-AMP algorithm can be calculated as in (\ref{Eq:SE_C_D}), and $e^{q+1}$ and $\vartheta^{q+1}$ can be obtained from (\ref{Eq:SE_MSE_Update}) and (\ref{Eq:SE_Var_Update}).
\end{prop}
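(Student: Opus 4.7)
The plan is to track how the decoupled scalar channel produced by Proposition~1 evolves across iterations, and to show that in the large system limit its effective noise becomes Gaussian with a variance determined by the current MSE $e^q$ and posterior variance $\vartheta^q$. The argument naturally splits into three steps: (i) identify the limiting form of $D_{k,m}^q$ and $C_{k,m}^q$ under i.i.d.\ Gaussian pilots; (ii) identify the limiting distribution of the effective noise $\tilde n^q$; and (iii) turn the scalar description into the asserted update equations for $e^{q+1}$ and $\vartheta^{q+1}$.

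First, starting from the factor-node updates (\ref{Eq:Fac_Update1})--(\ref{Eq:Fac_Update2}) and the variable-node update (\ref{Eq:Var_Update1}), I would invoke the weak law of large numbers for the i.i.d.\ pilots $s_{g,k}\sim\mathcal{CN}(0,1)$ designed in Section~\ref{Sub:Pilot Design}. Because $V_{g,m}^q=\sum_k |s_{g,k}|^2 v_{k,m}^q$ is a sum of $K$ independent terms with unit-variance weights, it concentrates on $\sum_k v_{k,m}^q$, which by definition (\ref{Eq:SE_Var}) equals $K\vartheta^q$ in the large system limit. Plugging this into (\ref{Eq:Var_Update1}) and using $\sum_g|s_{g,k}|^2\to G$ yields
\begin{equation}
D_{k,m}^q \;\to\; \frac{\sigma + K\vartheta^q}{G} \;=\; D^q,
\end{equation}
which matches the claimed expression and is independent of $(k,m)$, justifying the drop of indices in the scalar description.

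Second, I would analyze $C_{k,m}^q$ in (\ref{Eq:Var_Update2}). Writing $y_{g,m} = \sum_k s_{g,k} x_{k,m} + n_{g,m}$ and substituting (\ref{Eq:Fac_Update2}), the Onsager correction in $Z_{g,m}^q$ is precisely what kills the residual correlations between the message and the previous estimate, so that $y_{g,m}-Z_{g,m}^q$ behaves, in the large system limit, as a sum of the true noise and the contributions of the other unknowns $\{x_{k',m}\}_{k'\neq k}$. Combining this with $C_{k,m}^q - \hat x_{k,m}^q$ in (\ref{Eq:Var_Update2}), a central-limit argument (applied conditionally on the true channel, and mirroring the density-evolution analysis of \cite{AMP_SE}) gives $C_{k,m}^q = x_{k,m} + \tilde n_{k,m}^q$ with $\tilde n_{k,m}^q$ asymptotically circularly complex Gaussian. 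Its variance is computed as $D_{\text{eff}} = (\sigma + K e^q)/G$, where the $K e^q$ term arises from the aggregated estimation errors of the other coordinates and replaces $K\vartheta^q$ because what matters at the variable node for $C_{k,m}^q$ is the squared estimation error rather than the posterior variance. Together with the posterior law derived in Appendix~A, this establishes the scalar representation $C^q = x_0 + \sqrt{(\sigma+Ke^q)/G}\,z$ and the Bayesian posterior formula in the statement.

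Third, with the scalar channel in hand, the update rules (\ref{Eq:SE_MSE_Update})--(\ref{Eq:SE_Var_Update}) follow by substituting the denoising functions $g_a$ and $g_c$ of (\ref{Eq:Post_Mean})--(\ref{Eq:Post_Var}) into the definitions (\ref{Eq:SE_MSE}) and (\ref{Eq:SE_Var}), and then averaging over the joint law of $(x_0,z)$, which factorizes because $z$ is independent of $x_0$. The main obstacle I anticipate is step (ii): rigorously showing that the Onsager-corrected residual produces Gaussian effective noise and that the correlations between $C_{k,m}^q$ and $x_{k,m}$ across iterations stay negligible. A clean way to handle this is to follow the conditioning-technique proof of Bayati and Montanari adapted to the GMMV setting, noting that the independence of the pilot matrices ${\bf S}_p^G$ across $p$ ensures that the same decoupling argument applies subcarrier-by-subcarrier, so that averaging (\ref{Eq:SE_MSE}) and (\ref{Eq:SE_Var}) over $p$, $k$, $m$ yields a single scalar recursion.
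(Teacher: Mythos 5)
Your proposal is correct and takes essentially the same approach as the paper's proof in Appendix C: concentration of the i.i.d.\ pilot sums gives $D^q=(\sigma+K\vartheta^q)/G$, a central-limit argument on the aggregated residual errors of the other coordinates gives the zero-mean Gaussian effective noise of variance $(\sigma+Ke^q)/G$ in $C^q$ (including the same key distinction that the true squared error $e^q$ enters $C^q$ while the posterior variance $\vartheta^q$ enters $D^q$), and the SE recursions then follow by averaging $g_a$ and $g_c$ over the law of $(x_0,z)$. The only notable difference is presentational: the paper carries out the computation on the per-edge message quantities $Z_{g\to k}^q$ and $V_{g\to k}^q$, in which the $k$-th contribution is excluded by construction so the decoupling and CLT apply directly without any Onsager argument, whereas you start from the AMP-form updates with $Z_{g,m}^q$ and $V_{g,m}^q$ and must appeal to the Onsager correction and the conditioning technique to justify the decorrelation.
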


\begin{proof}
Please refer to the Appendix \ref{Proof:SE}.
\end{proof}

Since the a priori distribution $p_0\left(X\right)$ does not take the structured sparsity of channel matrix into consideration, the scalar SE in (\ref{Eq:SE_C_D})-(\ref{Eq:SE_Var_Update}) can not accurately analyze the MSE performance of the proposed GMMV-AMP algorithm.
Hence, we use Monte Carlo simulation to carry out the SE, so that (\ref{Eq:SE_MSE_Update}) and (\ref{Eq:SE_Var_Update}) are simplified and the exploitation of the structured sparsity is also taken into into account.
In contrast to the conventional AMP algorithms in \cite{{AMP-based MRA 1}} and \cite{{AMP-based MRA 2}}, which assume full knowledge of the a priori distribution of the channels and the noise variance, the SE for the proposed GMMV-AMP algorithm also needs to track the update rules of the hyper-parameters ${\bm{\theta }}$ in (\ref{Eq:Pri_Mean})-(\ref{Eq:Pri_Var}), and \cite{A_R}
\begin{equation}\label{Eq:SE_Noi_Update}
\sigma_{k,m}^{q+1} = \frac{\sigma_0 + e^q}{\left[1 + \vartheta^q/\sigma_{k,m}^q\right]} + \frac{\sigma_{k,m}^q\vartheta^q}{\sigma_{k,m}^q + \vartheta^q},
\end{equation}
where $\sigma_0$ is the actual noise variance.
The SE of the proposed GMMV-AMP algorithm is summarized in Algorithm \ref{Alg:SE}.

\begin{algorithm}[t]
\caption{State Evolution of GMMV-AMP Algorithm}
\label{Alg:SE}
\begin{algorithmic}[1]
\REQUIRE $\gamma = K_a/K$, $\kappa = G/K$, $M$, and $P$; $T_{\rm amp}$, $\rho$, and $\eta$.
\ENSURE The predicted MSE performance, ${\rm MSE}_{\rm SE}$.
\STATE Determine the number of Monte Carlo realizations $\widetilde K$, then ${\widetilde K_a} = \gamma{\widetilde K}$ and ${\widetilde G} = \kappa{\widetilde K}$.
\STATE Generate the Monte Carlo samples $\left\{{\bf X}_p\right\}_{p=1}^P$ according to the a priori distribution ${p_0}\left({\bf X}\right)$ and the sparsity structure.
\STATE $\forall p,k,m$: Set the iteration index $q$ to 1, initialize $e^q$ and $\vartheta^q$ as $e^1=\vartheta^1=1$, and initialize the hyper-parameters as $\gamma_{p,k,m}^1 = \gamma$, $\sigma_{p,k,m}^1 = 1$, $\mu_{p,k,m}^1 = 0$, $\tau_{p,k,m}^1 = 1$.
\REPEAT
\STATE $\forall p,k,m:C_{p,k,m}^q = x_{p,k,m} + \sqrt{\frac{\sigma_{p,k,m}^q + {\widetilde K}e^q}{\widetilde G}}z.$
\STATE $\forall p,k,m:C_{p,k,m}^q = {\rho}C_{p,k,m}^{q-1} + \left(1-\rho\right)C_{p,k,m}^q.$
\STATE $\forall p,k,m:D _{p,k,m}^q = {\rho}D_{p,k,m}^{q-1} + \left(1-\rho\right)\frac{\sigma_{p,k,m}^q + {\widetilde K}\vartheta^q}{\widetilde G}.$
\STATE $e^{q+1}\! =\! \frac{1}{KMP}\!\! \sum\nolimits_p\!{\sum\nolimits_k\!{\sum\nolimits_m\!{\left|g_a(C_{p,k,m}^q,D _{p,k,m}^q)\! -\! x_{p,k,m}\right|^2}}}.$
\STATE $\vartheta^{q+1} = \frac{1}{KMP}\sum\nolimits_p{\sum\nolimits_k{\sum\nolimits_m{g_c(C_{p,k,m}^q,D_{p,k,m}^q)}}}.$
\STATE Update the hyper-parameters via (\ref{Eq:Pri_Mean})-(\ref{Eq:Pri_Var}) and (\ref{Eq:SE_Noi_Update}).
\STATE Refine update rule of $\gamma_{p,k,m}$ as \emph{line \ref{Step:Hyper_Refine}} of Algorithm \ref{Alg:GMMV-AMP}.
\STATE $q = q + 1$.
\UNTIL $q > T_{\rm amp}$ or $\left|e^q - e^{q-1}\right| < \eta$.
\RETURN ${\rm MSE}_{\rm SE} = e^q$.
\end{algorithmic}
\end{algorithm}

\section{Simulation Results}
\label{Sec:Sim Result}

For the presented simulation results, we assume the BS employs a ULA with $M$ antennas, $K = 500$ potential UEs are randomly distributed in the cell with radius 1 km, and $K_a = 50$ ($K_a \ll K$) UEs are active unless otherwise specified.
Furthermore, the carrier frequency is 2 ${\rm GHz}$, the bandwidth is $B_s = 10$ ${\rm MHz}$, and the received SNR is 30 ${\rm dB}$.
The system adopts OFDM for massive access in an eMBB scenario, where $N = 2048$ subcarriers and a cyclic prefix of length $N_{\rm CP}=64$ are employed.
$P = N_{\rm CP}$ pilots are uniformly allocated to the $N$ subcarriers \cite{{Frequency common supp 1}}.
We consider the one-ring channel model with limited angular spread \cite{One-ring}, so that each UE's massive MIMO channel exhibits clustered sparsity in the virtual angular domain.
The large scale fading $\rho_k$ is given by the standard Log-distance path loss model as $\rho_k = 128.1 + 37.6{\rm lg}(d_k)$ with distance $d_k$ measured in km.
The small scale fading channel is generated by (\ref{Eq:Channel_Model}), where $L$ varies from 8 to 40 \cite{A_R}, the related AOAs are generated within an angular spread $\Delta$ varying from \ang{20} to \ang{40} so that the effective sparsity level in virtual angle domain $S_a$ varies from 8 to 14,  ${\beta_{k,l}} \sim {\cal CN}({\beta_{k,l}}; 0, 1)$, and $\varpi_{k,l}$ is randomly and uniformly selected from $[0, N_{CP}/B_s]$.
Furthermore, $T_{\rm amp} = 200$, $T_{\rm tur} = 10$, $\eta = 10^{-5}$, and the simulation results are obtained by averaging over $N_{\rm sim} = 3 \times 10^3$ simulation runs unless otherwise specified.

\begin{table}[t]
\scriptsize
\caption{Considered Schemes}
\label{Tab:Scheme_Compare}
\vspace{-1.5mm}
\begin{center}
\begin{tabular}{|c|c|c|c|c|}
\hline
\multicolumn{2}{|c|}{Scheme}                                                                  &Model          &Algorithm        & $G$        \\
\hline
\multirow{2}{*}{\begin{tabular}[c]{@{}c@{}}Simultaneous\\ AUD and CE\end{tabular}} & Scheme 1 & (\ref{Eq:CS_Model_Spa})          & GMMV-AMP        & fixed    \\ \cline{2-5}
                                                                                   & Scheme 2 & (\ref{eq:CS Model Vir})          & GMMV-AMP        & fixed    \\
\hline
\multirow{2}{*}{\begin{tabular}[c]{@{}c@{}}Alternating\\ AUD and CE\end{tabular}}  & Scheme 3 & (\ref{Eq:CS_Model_Spa})  and (\ref{eq:CS Model Vir}) & Turbo-GMMV-AMP  & fixed  \\ \cline{2-5}
                                                                                   & Scheme 4 & (\ref{Eq:CS_Model_Spa})  and (\ref{eq:CS Model Vir}) & Turbo-GMMV-AMP  & adaptive \\
\hline
\multicolumn{2}{|c|}{\begin{tabular}[c]{@{}c@{}}Baseline\\ Schemes\end{tabular}}              & (\ref{Eq:CS_Model_Spa})          & \begin{tabular}[c]{@{}c@{}}GSP {[}32{]}, SOMP {[}31{]}, \\ and DSAMP {[}21{]}\end{tabular}                                                                                                      & fixed\\
\hline
\end{tabular}
\end{center}
\end{table}

An overview of the proposed \emph{Schemes 1-4} and the considered baseline schemes, GSP \cite{GSP}, SOMP \cite{SOMP}, and DSAMP \cite{{Frequency common supp 1}}, is presented in Table \ref{Tab:Scheme_Compare}.
For performance evaluation, we consider the detection error probability $P_e$ for AUD and the MSE for CE, which are respectively defined as
\begin{equation}
P_e = \frac{\sum_k{\left|{\widehat \alpha_k} - {\alpha _k}\right|}}{K},\   {\rm MSE} =  \frac{\sum_p{\left\|{\widehat {\bf X}_p} - {\bf X}_p\right\|_{\rm F}^2}}{KMP}.
\end{equation}
Here, to reduce computational complexity, for \emph{Schemes 1-4}, only $\widetilde P$ out of $P$ pilot subcarriers are used to estimate AUS $\cal A$ and the corresponding channels.
The remaining $(P - \widetilde P)$ pilot subchannels of the active UEs can be easily estimated by applying the GMMV-AMP algorithm to (\ref{Eq:CS_Model_Vir_Act}) given $\Omega = {\widehat {\cal A}}$.

\begin{figure}[t]
	\centering
	\includegraphics[width=1\columnwidth,keepaspectratio]
    {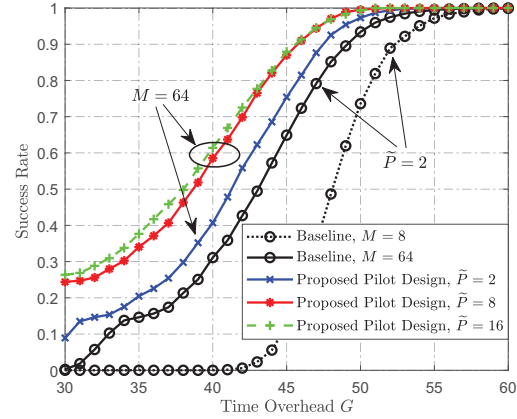}
	\caption{Comparison of success rate for the proposed DCS-based pilot design and identical pilots (Baseline).}
    \label{Fig:Pilot Design}
    \vspace{-2mm}
\end{figure}
Fig. \ref{Fig:Pilot Design} verifies the superiority of the proposed DCS-based pilot design for broadband massive access based on \emph{Scheme 1}.
The quality of the pilots is evaluated in terms of the success rate, which is defined as the ratio of the number of simulation runs with $P_e = 0$ to the total number of simulation runs.
Fig. \ref{Fig:Pilot Design} shows that, as expected, employing different ${\bf S}_p^G$ for different $p$ improves the success rate.
Moreover, the performance is further improved when massive MIMO and larger $\widetilde P$ are employed, since the structured sparsity of $\left\{{\bf X}_p\right\}_{p=1}^P$ is leveraged.

\begin{figure}[t]
	\centering
	\includegraphics[width=1\columnwidth,keepaspectratio]
    {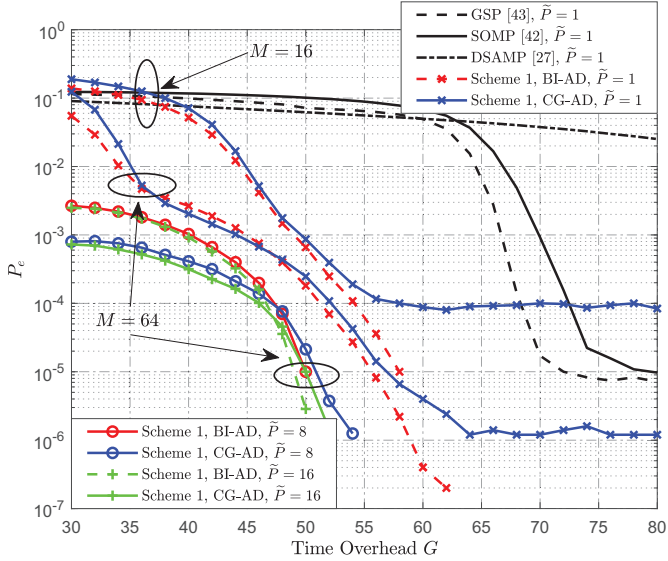}
	\caption{Comparison of detection error probabilities for \emph{Scheme 1} and state-of-the-art GMMV-CS algorithms.}
    \label{Fig:Pe_S1_Class}
    \vspace{-2mm}
\end{figure}

Fig. \ref{Fig:Pe_S1_Class} examines the AUD performance of \emph{Scheme 1}, where the performances of three state-of-the-art GMMV-CS algorithms are shown as benchmarks.
As can be observed, the GMMV-AMP algorithm outperforms the other three algorithms.
Hence, the access latency can be considerably reduced for a given target $P_e$.
For example, for $M = 16$ and ${\widetilde P} = 1$, the GSP-based scheme requires $G = 72$ to achieve $P_e \le 10^{-5}$,  whereas \emph{Scheme 1} with BI-AD needs only $G = 58$, which indicates a reduction of approximately $19\%$ in the access latency.
Moreover, \emph{Scheme 1} can achieve a better detection performance by equipping more antennas at the BS and/or utilizing larger $\widetilde P$, since a larger $M$ and/or $\widetilde P$ can enhance the space-frequency structured sparsity shown in Fig. \ref{Fig:Struct_Sparse}(a).
However, this improvement becomes negligible when $M$ and $\widetilde P$ are sufficiently large.

In Fig. \ref{Fig:Pe_S1_Class}, we further compare the performance of CG-AD and BI-AD.
For $G < 55$ or ${\widetilde P} \ge 8$, CG-AD and BI-AD have similar performance.
However, when $G \ge 56$ and $\widetilde P \le 8$, BI-AD outperforms CG-AD, as CG-AD suffers from a detection error floor.
The reason for this behavior is that when there are sufficiently many measurements for reliable CS recovery, the belief indicator $\pi_{p,k,m}$ takes values of 0 and 1, but the estimated channel gain ${\hat x_{p,k,m}}$ takes the true value of $x_{p,k,m}$.
Hence, based on the threshold function $r\left(x; \varepsilon\right)$, BI-AD can reliably determine whether $x_{p,k,m}$ is zero or not, while for CG-AD, missed detections and false alarms can not be avoided, which may lead to a detection error floor.
Clearly, for \emph{Scheme 1}, BI-AD is more reliable than CG-AD for AUD.

\begin{figure}[t]
	\centering
	\includegraphics[width=1\columnwidth,keepaspectratio]
    {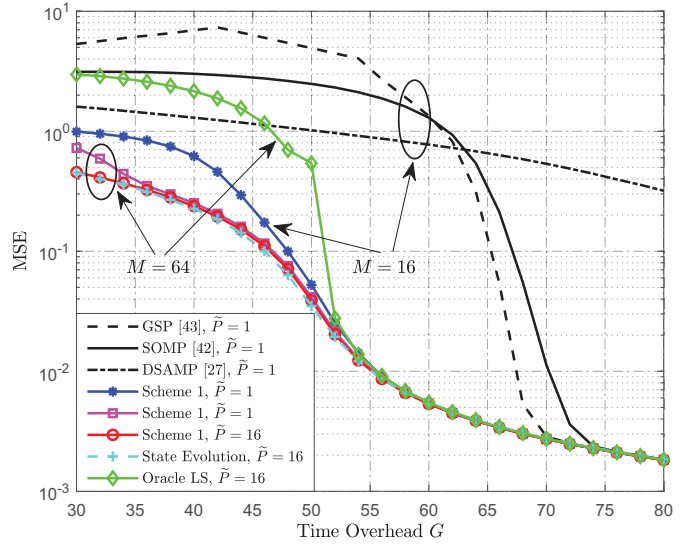}
	\caption{MSE performance of \emph{Scheme 1}, state-of-the-art GMMV-CS algorithms, oracle LS, and SE.}
    \label{Fig:MSE_S1_Class}
\end{figure}

Fig. \ref{Fig:MSE_S1_Class} depicts the CE MSE performance of \emph{Scheme 1} and the three state-of-the-art GMMV-CS algorithms also considered in Fig. \ref{Fig:Pe_S1_Class}.
The oracle least square (LS) estimator with known AUS $\cal A$ is used as performance upper bound \cite{{Frequency common supp 1}}.
When $G$ is sufficiently large, both \emph{Scheme 1} and the three baseline schemes approach the oracle LS performance bound, since ${\cal A}$ is accurately estimated in this case, and the CE problem is reduced to an oracle LS problem.
However, for $G < 70$, \emph{Scheme 1} outperforms the three baseline algorithms, and its performance improves when $M$ and/or $\widetilde P$ increase.
Besides, the MSE performance of \emph{Scheme 1} is accurately predicted by SE.
Here, an important observation is that when $G < K_a$, both \emph{Scheme 1} and the oracle LS estimator can not perform reliable CE.
This suggests that $G \ge K_a$ is required for reliable CE in (\ref{Eq:CS_Model_Spa}).
Hence, the reduction of $G$ is limited to $K_a$, which is identical to the sparsity level of the column vectors in ${\bf X}_p$.
This motivates \emph{Scheme 2} for simultaneous AUD and CE, where the sparsity level of the channel matrix, defined as ${\widetilde K_a} = {\rm max} \left\{|{\rm supp}\{\left[{\bf W}_p\right]_{:,m}\}|_c, \forall p,m \right\}$, is less than $K_a$.

\begin{figure}[t]
	\centering
	\includegraphics[width=1\columnwidth,keepaspectratio]
    {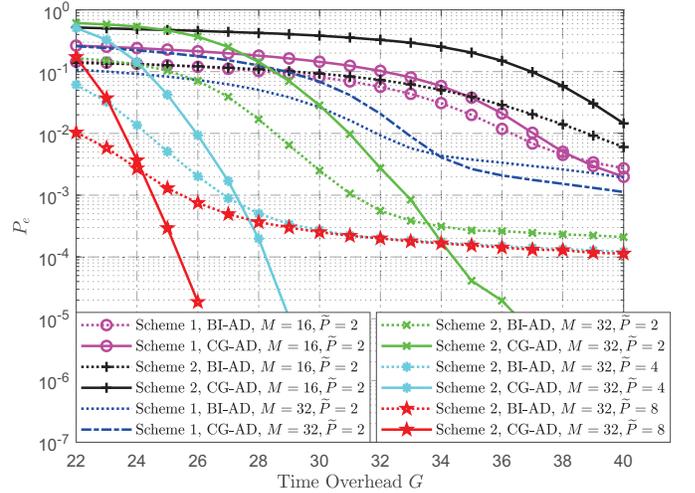}
	\caption{Comparison of detection error probabilities for \emph{Scheme 1} and \emph{Scheme 2}.}
    \label{Fig:Pe_S1_S2}
    \vspace{-2mm}
\end{figure}

Fig. \ref{Fig:Pe_S1_S2} compares the AUD performance of \emph{Scheme 1} and \emph{Scheme 2}.
For $M = 16$, \emph{Scheme 1} outperforms \emph{Scheme 2} for both BI-AD and CG-AD, respectively.
In contrast, for $M  = 32$, \emph{Scheme 2} achieves a much better AUD performance than \emph{Scheme 1} when $G > 28$.
This is because ${\bf W}_p$ is sparser than ${\bf X}_p$, i.e., ${\widetilde K_a} < K_a$, and the required $G$ for reliable AUD and CE in \emph{Scheme 2} mainly depends on ${\widetilde K_a}$  rather than $K_a$.
However, the virtual angular domain sparsity weakens the common sparsity pattern across multiple columns of ${\bf W}_p$.
Therefore, if the BS has a small number of antennas, e.g., $M = 16$, \emph{Scheme 1} outperforms \emph{Scheme 2} by leveraging the common sparsity observed at different BS antennas.
However, when $M$ becomes large, \emph{Scheme 2} can considerably reduce the required $G$ for reliable AUD and CE compared to \emph{Scheme 1}.
In addition, BI-AD in \emph{Scheme 2} suffers from an obvious detection error floor, which indicates that for \emph{Scheme 2}, CG-AD is more reliable than BI-AD for detection of the active UEs.
Fig. \ref{Fig:MSE_S1_S2} compares the CE MSE performance of \emph{Scheme 1} and \emph{Scheme 2}, which again verifies the superiority of \emph{Scheme 2} for massive MIMO systems.
The theoretical SE accurately predicts the MSE.

\begin{figure}[t]
	\centering
	\includegraphics[width=1\columnwidth,keepaspectratio]
    {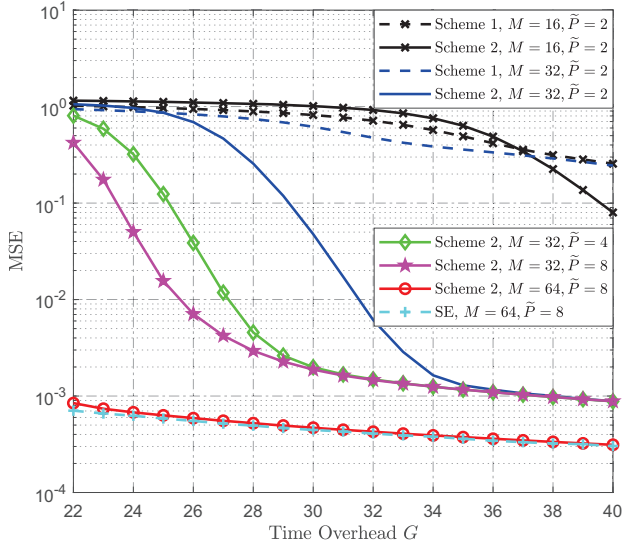}
	\caption{MSE performance of \emph{Scheme 1}, \emph{Scheme 2}, and SE.}
    \label{Fig:MSE_S1_S2}
\end{figure}

\begin{figure}[t]
	\centering
	\includegraphics[width=1\columnwidth,keepaspectratio]
    {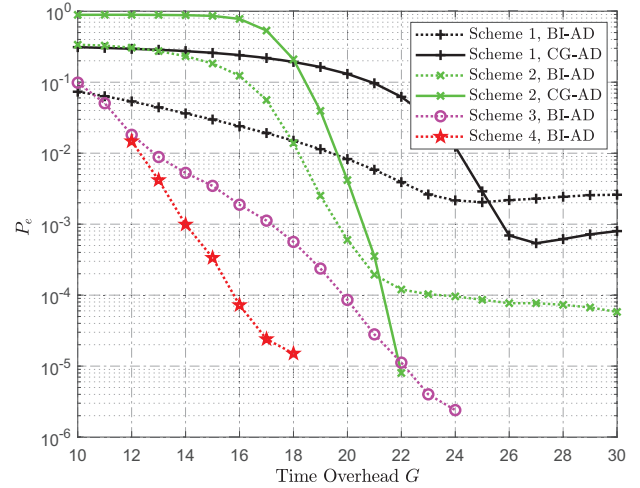}
	\caption{Comparison of detection error probabilities of \emph{Schemes 1-4} for $M = 64$, ${\widetilde P} = 8$, and $N_{\rm sim} = 2000$.}
    \label{Fig:Pe_S1-S4}
    \vspace{-2mm}
\end{figure}

\begin{figure}[t]
	\centering
	\includegraphics[width=1\columnwidth,keepaspectratio]
    {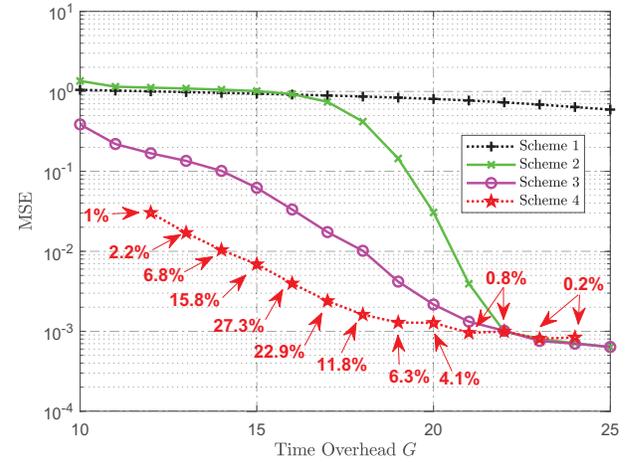}
	\caption{MSE performance of \emph{Schemes 1-4}, where $M = 64$, ${\widetilde P} = 8$, $N_{\rm{sim}} = 2000$, and the percentage of simulation runs requiring a given consumed time slot overhead $G$ is indicated.}
    \label{Fig:MSE_S1-S4}
\end{figure}

\begin{figure}[t]
	\centering
	\includegraphics[width=1\columnwidth,keepaspectratio]
    {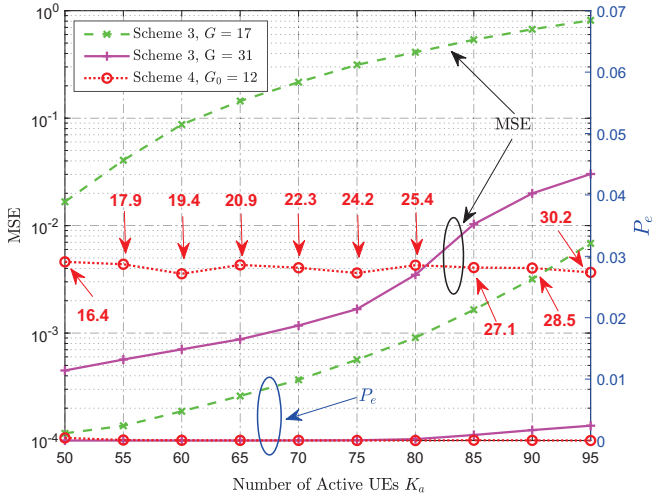}
	\caption{{MSE and $P_e$ performance of \emph{Scheme 3} and \emph{Scheme 4} for different $K_a$, where $M = 64$, ${\widetilde P} = 8$, $N_{\rm{sim}} = 1000$, and the average time slot overhead of \emph{Scheme 4} is indicated.}}
    \label{Fig:Pe_MSE_S3_S4}
    \vspace{-2mm}
\end{figure}

Fig. \ref{Fig:Pe_S1-S4} and Fig. \ref{Fig:MSE_S1-S4} compare the MSE and $P_e$ performance of \emph{Schemes 1-4}, respectively.
The details of \emph{Scheme 4} are shown in Algorithm \ref{Alg:Proposed_AUD_CE}.
Given $\gamma = K_a/K = 0.1$ and $M = 64$, an initial time slot overhead of
\begin{equation}\label{Eq:G_0}
\begin{aligned}
G_0 &= 1.5{\mathbb E}[|{\rm supp}\{\left[{\bf W}_p\right]_{:,m}\}|_c]\\
    &= 1.5{\gamma}K{\mathbb E}[S_a]/M \approx 12
\end{aligned}
\end{equation}
is adopted.
For different simulation runs, the varying $S_a$ yields a different sparsity level for $\left\{{\bf W}_p\right\}_{p=1}^P$, i.e., different ${\widetilde K_a}$.
The results and the consumed $G$, i.e., the overhead, are recorded after the pre-defined criterion in \emph{line 8} of Algorithm \ref{Alg:Proposed_AUD_CE} is met.
For very low time slot overheads, i.e., $G \in \left[10,18\right]$, both \emph{Scheme 1} and \emph{Scheme 2} have a poor performance.
This is because for these schemes, $G \ll K_a$ leads to extremely insufficient measurements.
By contrast, \emph{Scheme 3} using the proposed Turbo-GMMV-AMP algorithm can achieve a much better AUD and CE performance than \emph{Scheme 1} and \emph{Scheme 2}, which confirms its superiority in reducing access latency.
Finally, \emph{Scheme 4}, i.e., the proposed CS-based adaptive AUD and CE scheme, adaptively adjusts the overhead to achieve satisfactory AUD and CE performance.
In Fig. \ref{Fig:MSE_S1-S4}, for \emph{Scheme 4}, the percentage of simulation runs requiring the given time slot overhead $G$ is provided.
This reveals that $88.6\%$ of the simulation runs require overheads of $G \in \left[12,18\right]$, and the corresponding MSE performance is much better than those for \emph{Schemes 1-3}.

Fig. \ref{Fig:Pe_MSE_S3_S4} further compares the AUD and CE performance of \emph{Scheme 3} and \emph{Scheme 4} for different number of active UEs $K_a$.
The proposed \emph{Scheme 4} adaptively adjusts the time slot overhead to guarantee reliable AUD and CE for different $K_a$.
However, \emph{Scheme 3} employs a fixed overhead and suffers from poor performance when $K_a$ becomes large.
This means that some of the active UEs will not be able to access the network.
Hence, for practical systems with time-varying UE activity, the superiority of the proposed CS-based adaptive AUD and CE scheme is evident.

\section{Conclusion}

This paper investigates new methods for facilitating massive access in massive MIMO systems, which leverage the sporadic traffic of the UEs and the virtual angular domain sparsity of massive MIMO channels to dramatically reduce the access latency.
The space-frequency structured sparsity of the channel matrix in the spatial domain improves the AUD performance, while the angular-frequency structured sparsity of the channel matrix in the angular domain improves the CE performance.
Therefore, joint AUD and CE schemes exploiting only spatial domain or only angular domain channel model can not take full advantage of the sparsity properties of massive access in massive MIMO systems.
This motivates the derivation of the proposed Turbo-GMMV-AMP algorithm, which achieves a significant performance improvement by performing AUD based on a spatial domain channel model and CE based on an angular domain channel model in an alternating manner.
Furthermore, for practical systems, where the number of active UEs is not known, the proposed CS-based adaptive AUD and CE scheme can adjust the time slot overhead to realize ultra-reliable low-latency massive access.

\appendices
\section{Proof of the Proposition 1}
\label{Proof:Decouple Post}

The factorization in the joint posterior probability (\ref{Eq:Joint_Post}) can be represented by a bipartite graph, which motivates the application of the sum-product (SP) algorithm to realize the MMSE estimator \cite{{Sum-Product}}.
As the bipartite graph consists of $M$ independent subgraphs, we only discuss the $m$-th subgraph in the following derivations, and the antenna index $m$ is dropped for notational simplicity.
For the $m$-th subgraph, we define variable nodes $V = \left[K\right]$, factor nodes of the likelihood function $F = \left[G\right]$, and edges $E = \left[K\right] \times \left[G\right] = \left\{\left(k,g\right): k \in \left[K\right], g \in \left[G\right]\right\}$.
The update rules for the messages associated to the edges are \cite{{AMP}}
\begin{align}
{\xi_{k \to g}^{q+1}}\left(x_k\right) &\propto p_0\left(x_k\right)\prod\limits_{b \ne g} {{\hat \xi_{b \to k}^q}\left(x_k\right)}, \label{Eq:Message_Var} \\
{\hat \xi_{g \to k}^q}\left(x_k\right) &\propto \int{\prod\limits_{j \ne k} {\xi_{j \to g}^q\left(x_j\right)p\left(y_g|{\bf x}\right)}}d{\bf x}_{\backslash k}, \label{Eq:Message_Fac}
\end{align}
where $b \in \left[G\right]$, $j \in \left[K\right]$, $q$ denotes the $q$-th iteration, and $\propto$ denotes equality up to a constant scale factor.

One practical hurdle for the large-scale implementation of the SP algorithm lies in the required evaluation of high-dimensional integrals for calculation of messages ${\hat \xi_{g \to k}^q}\left(x_k \right)$.
This leads to an unacceptably high complexity.
However, a key observation is that, in the large system limit with $K \to \infty$, the messages ${\hat \xi_{g \to k}^q}\left(x_k\right)$ can be approximated by Gaussian distributions \cite{{AMP}}.
Since the random variables $\left\{x_j\right\}_{j=1}^K$ are independently complex Gaussian distributed, random variable $Z_{g,k} = \sum\nolimits_{j \ne k} s_{g,j}x_j$ follows a complex Gaussian distribution $Z_{g,k} \sim {\cal CN}\left(z_{g,k}; Z_{g \to k}^q, V_{g \to k}^q\right)$, with mean $Z_{g \to k}^q$ and variance $V_{g \to k}^q$ given as
\begin{equation}\label{Eq:Z_Mean}
\begin{array}{l}
\!\!\! Z_{g \to k}^q = \sum\nolimits_{j \ne k} {s_{g,j}{\hat x_{j \to g}^q}},\; V_{g \to k}^q = \sum\nolimits_{j \ne k} {\left|s_{g,j}\right|^2v_{j \to g}^q}, \!\!\!
\end{array}
\end{equation}
where ${\hat x_{j \to g}^q}$ and $v_{j \to g}^q$ are the mean and variance of message ${\xi_{j \to g}^q}\left(x_j\right)$, respectively.
Hence, the messages ${\hat \xi_{g \to k}^q}\left(x_k\right)$ can be approximated as
\begin{equation}\label{Eq:Fac_Message_Approx}
\begin{array}{l}
{\hat \xi_{g \to k}^q}\left(x_k\right) \propto {\cal CN}\left(x_k; \frac{y_g - Z_{g \to k}^q}{s_{g,k}}, \frac{\sigma + V_{g \to k}^q}{\left|s_{g,k}\right|^2}\right).
\end{array}
\end{equation}
Further, the posterior distribution of $x_k$ is calculated as
\begin{equation}\label{Eq:Message_Mar_Post}
\begin{aligned}
{\xi_k^{q+1}}\left(x_k\right) &=  p_0\left(x_k\right)\prod\limits_g {{\hat \xi_{g \to k}^q}\left(x_k\right)}  \\
                              &\propto p_0\left(x_k\right){\cal CN}\left(x_k; C_k^q, D_k^q\right),
\end{aligned}
\end{equation}
where
\begin{equation}\label{Eq:D and C}
\begin{array}{l}
\!\!\! D_k^q =\! \left[\sum\nolimits_g {\frac{\left|s_{g,k}\right|^2}{\sigma + V_{g \to k}^q}}\right]^{-1},\; C_k^q = D_k^q\sum\nolimits_g {\frac{s_{g,k}^*\left(y_g - Z_{g \to k}^q \right)}{\sigma + V_{g \to k}^q}}. \!\!\!
\end{array}
\end{equation}
It is convenient to introduce a family of densities
\begin{equation}\label{Eq:Post}
p\left(x; C, D\right) = \frac{1}{\widetilde Z_4}p_0\left(x\right){\cal CN}\left(x; C, D\right),
\end{equation}
where ${\widetilde Z_4} = \int {p_0\left(x\right){\cal CN}\left(x; C, D\right)}dx$ is a normalization constant.
The corresponding mean and variance are
\begin{align}
g_a\left(C, D\right) &= \int {xp\left(x; C, D\right)}dx, \label{Eq:Define_Post_Mean} \\
g_c\left(C, D\right) &= \int {\left|x - g_a\left(C, D\right)\right|^2 p\left(x; C, D\right)}dx, \label{Eq:Define_Post_Var}
\end{align}
respectively.
Define ${\xi_{k}^{q+1}}(x_k) \propto {\cal CN}\left(x_k; {\hat x_k^{q+1}}, v_k^{q+1}\right)$, with ${\hat x_k^{q+1}} = g_a\left(C_k^q, D_k^q\right)$ and $v_k^{q+1} = g_c\left(C_k^q, D_k^q\right)$.
The messages ${\xi_{k \to g}^{q+1}}\left(x_k\right)$ can be approximated as
\begin{equation}\label{Eq:Var_Message_Approx}
{\xi_{k \to g}^{q+1}}\left(x_k\right) = \frac{{\xi_k^{q+1}}\left(x_k\right)}{\xi_{g \to k}^q\left(x_k\right)} \propto {\cal CN}\left(x_k; {\hat x_{k \to g}^{q+1}}, v_{k \to g}^{q+1} \right),
\end{equation}
where
\begin{align}
\frac{1}{v_{k \to g}^{q+1}} &= \frac{1}{v_k^{q+1}} - \frac{\left|s_{g,k}\right|^2}{\sigma + V_{g \to k}^q}, \\
\frac{1}{{\hat x_{k \to g}^{q+1}}} &= v_{k \to g}^{q+1}\left[\frac{\hat x_k^{q+1}}{v_k^{q+1}} - \frac{s_{g,k}^*\left(y_g - Z_{g \to k}^q\right)}{\sigma + V_{g \to k}^q}\right].
\end{align}
At this point, the messages ${\xi_{k \to g}^{q+1}}\left(x_k\right)$ and ${\hat \xi_{g \to k}^q}\left(x_k\right)$ have been approximated as Gaussian densities.
However, the computational complexity is still high when the system is large, since the number of messages scales with the number of potential UEs $K$.
In order to reduce the number of messages in the $q$-th iteration, we can further simplify the update rules by making some approximations.
Defining $Z_g^q = \sum\nolimits_{j=1}^K {s_{g,j}{\hat x_{j \to g}^q}}$, $V_g^q = \sum\nolimits_{j=1}^K {\left|s_{g,j}\right|^2v_{j \to g}^q}$, we can rewrite (\ref{Eq:Z_Mean}) as
\begin{equation}\label{Eq:Post Mean_Var}
\begin{array}{l}
\!\!\! Z_{g \to k}^q = Z_g^q - s_{g,k}{\hat x_{k \to g}^q},\ V_{g \to k}^q = V_g^q - \left|s_{g,k}\right|^2v_{k \to g}^q. \!\!\!
\end{array}
\end{equation}
Substituting (\ref{Eq:Post Mean_Var}) into (\ref{Eq:Var_Message_Approx}) and ignoring terms approximated as 0 in the large system limit $K \to \infty$ \cite{{AMP}}, the update rules at the variable nodes and the factor nodes can be approximated as in (\ref{Eq:Var_Update1})-(\ref{Eq:Fac_Update2}).
Finally, the posterior distribution of $x_k$ is given by (\ref{Eq:Post_Approx1}).

\vspace{-1.5mm}
\section{Proof of the Proposition 2}
\label{Proof:Belief Indicator}

If a reliable estimate of ${\bf X}_p$ is acquired after the convergence of the GMMV-AMP algorithm, the variance of the posterior distribution of $x_{k,m}$ tends to be zero, i.e., $v_{k,m}^{\infty} \to 0$, thus,
\begin{equation}
V_{g,m}^\infty = \mathop{\lim}\limits_{v_{k,m}^\infty \to 0} \sum\nolimits_k {\left|s_{g,k}\right|^2v_{k,m}^\infty}  = 0.
\end{equation}
Therefore, $D_{k,m}^\infty$ in (\ref{Eq:Var_Update1}) is calculated as
\begin{equation}\label{Eq:D_Infty}
D_{k,m}^\infty = \frac{\sigma}{\sum\nolimits_g {\left|s_{g,k}\right|}^2} \mathop \approx \limits^{(a)} \frac{\sigma}G.
\vspace{-0.5mm}
\end{equation}
Here, approximation $(a)$ is because the pilots are generated from an i.i.d standard complex Gaussian distribution, i.e., $s_{g,k} \sim {\cal CN}\left(s_{g,k}; 0, 1\right)$, thus $\sum\nolimits_g{\left|s_{g,k}\right|^2} \approx \sum\nolimits_g{\mathbb E}\left[\left|s_{g,k}\right|^2\right] = G$.
In the large system limit, as $G \to \infty$, $D_{k,m}^\infty \to 0^+$.
For a given realization of the massive access channel matrix ${\bf X}_p$, defining $r^\infty = \sum\nolimits_g {\frac{s_{g,k}^*\left(y_{g,m} - Z_{g,m}^\infty\right)}{\sigma} + V_{g,m}^\infty}$, $C_{k,m}^\infty$ in (\ref{Eq:Var_Update2}) is given as
\begin{equation}\label{Eq:C_Infty}
C_{k,m}^\infty  = \left\{ {\begin{array}{*{20}{c}}
  {D_{k,m}^\infty r^\infty {\text{,}}}&{{x_{k,m}} = 0,} \\
  {{x_{k,m}} + D_{k,m}^\infty r^\infty,}&{{x_{k,m}} \ne 0.}
\end{array}\begin{array}{*{20}{c}}
  {{\text{  }}} \\
  {{\text{   }}}
\end{array}} \right.
\end{equation}
Hence, substituting (\ref{Eq:D_Infty}) and (\ref{Eq:C_Infty}) into (\ref{Eq:L}), for $x_{k,m} = 0$, we have
\begin{equation}
{\cal L} = \mathop {\lim}\limits_{D_{k,m}^\infty  \to {0^+}} \left({\frac{1}{2}\ln \frac{{D_{k,m}^\infty }}{{D_{k,m}^\infty  + \tau }}} \right) - \frac{{{{\left| \mu  \right|}^2}}}{{2\tau }} = -\infty,
\end{equation}
while for $x_{k,m} \ne 0$,
\begin{equation}
\begin{aligned}
{\cal L} &= \mathop {\lim }\limits_{D_{k,m}^\infty  \to {0^ + }} \frac{1}{2}\left( {\ln \frac{{D_{k,m}^\infty }}{{D_{k,m}^\infty  + \tau }} + \frac{{{x_{k,m}}}}{{D_{k,m}^\infty }}} \right) - \frac{{{{\left| \mu  \right|}^2}}}{{2\tau }} \\
         &= \mathop {\lim }\limits_{D_{k,m}^\infty  \to {0^ + }} \frac{{{x_{k,m}}}}{{2D_{k,m}^\infty }} = +\infty,
\end{aligned}
\end{equation}
which yields
\begin{equation}
\pi_{k,m}^\infty = \left\{ {\begin{array}{*{20}{c}}
  {1,{\text{  }}{x_{k,m}} \ne 0}, \\
  {0,{\text{  }}{x_{k,m}} = 0}.
\end{array}} \right.
\end{equation}

\vspace{-1.5mm}
\section{Proof of the Proposition 3}
\label{Proof:SE}

Here, for simplicity of derivation, we focus on the $m$-th subgraph only and drop the antenna index $m$, as in Appendix \ref{Proof:Decouple Post}.
The derivation is based on (\ref{Eq:CS_Model_Spa}), and can be easily extended to model (\ref{eq:CS Model Vir}), thus we have
\begin{equation}\label{Eq:Salar_Measure}
\begin{array}{l}
y_g = \sum\nolimits_j s_{g,j}x_j + n_g.
\end{array}
\end{equation}
Substituting (\ref{Eq:Z_Mean}) and (\ref{Eq:Salar_Measure}) into (\ref{Eq:D and C}), $C_k^q$ is computed as
\begin{equation}\label{Eq:SE_C}
\begin{aligned}
C_k^q &= \frac{{\sum\nolimits_g{\frac{{{\left| {{s_{g,k}}} \right|}^2{x_k} + s_{g,k}^*{n_g} + s_{g,k}^*\sum\nolimits_{j \ne k}{s_{g,j}}\left({{x_j} - {{\hat x}_{j \to g}^q}}\right) }}{{\sigma  + \sum\nolimits_{j \ne k}{{\left| {{s_{g,j}}} \right|}^2v_{j \to g}^q}}}}}}{{\sum\nolimits_g {\frac{{{\left| {{s_{g,k}}} \right|}^2}}{{\sigma  + \sum\nolimits_{j \ne k} {{{\left| {{s_{g,j}}} \right|}^2}v_{j \to g}^q}}}}}}\\
      &\mathop \approx \limits^{(b)} \frac{{\sum\nolimits_g \!\! \left[{{{\left| {{s_{g,k}}} \right|}^2}{x_k} + s_{g,k}^*{n_g} + s_{g,k}^* \! \sum\nolimits_{j \ne k} {{s_{g,j}} \! \left( {{x_k} - {{\hat x}_{j \to g}^q}} \right)}}\right]  }}{{\sum\nolimits_g {{{\left| {{s_{g,k}}} \right|}^2}}}}\\
      &\mathop  \approx \limits {x_k} + \frac{{{\sum\nolimits_g {s_{g,k}^*{n_g} + \sum\nolimits_g {s_{g,k}^*\sum\nolimits_{j \ne k} {{s_{g,k}}\left( {{x_j} - {{\hat x}_{j \to g}^q}} \right)} } } } }}{G}.
\end{aligned}
\end{equation}
In (\ref{Eq:SE_C}), approximation $(b)$ is because the term $\sigma + \sum\nolimits_{j \ne k}{\left|s_{g,j}\right|^2}v_{j \to g}^q$ is approximately independent of $g$ in the large system limit with $K \to \infty$, which has been proven in \cite{{AMP}}.
Define
\begin{equation}\label{Eq:r_k}
\begin{array}{l}
r_k^q = \sum\nolimits_g{s_{g,k}^*n_g} + \sum\nolimits_g{s_{g,k}^*\sum\nolimits_{j \ne k}s_{g,k}\left(x_k-{\hat x_{j \to g}^q}\right)}.
\end{array}
\end{equation}
Since $s_{g,k} \sim {\cal CN}\left(s_{g,k}; 0, 1\right)$, $n_g \sim {\cal CN}\left(n_g; 0, \sigma\right)$, and $s_{g,k}$ is independent of $n_g$, $r_k$ follows a complex Gaussian distribution according to the central limit theorem as $K \to \infty$.
Moreover, by substituting (\ref{Eq:SE_MSE}) into (\ref{Eq:r_k}), we find the mean and the variance of $r_k^q$ are 0 and $G(\sigma+Ke^q)$, respectively.
Hence,
\begin{equation}\label{Eq:C_in_SE}
\begin{array}{l}
C_k^q = x_q + \sqrt{\frac{\sigma + Ke^q}G}z,
\end{array}
\end{equation}
where $z \sim {\cal CN}\left(z; 0, 1\right)$.
Meanwhile, by substituting (\ref{Eq:SE_Var}) into $D_k^q$ in (\ref{Eq:D and C}), we can obtain
\begin{equation}\label{eq:D_in_SE}
D_k^q {\approx} \left[\frac{\sum\nolimits_g{\left|s_{g,k}\right|^2}}{\sigma + \sum\nolimits_{j \ne k}{\left|s_{g,j}\right|^2}v_{j \to g}^q}\right]^{-1} \!\!= \frac{\sigma+K{\vartheta^q}}{G}.
\end{equation}
Hence, the Proposition 2 is proven.

\ifCLASSOPTIONcaptionsoff
\newpage
\fi

\end{document}